\newtheorem{definition}{\emph{\underline{Definition}}}
\newtheorem{lemma}{\emph{\underline{Lemma}}}
\newtheorem{proposition}{\emph{\underline{Proposition}}}
\newtheorem{example}{\bf Example}
\newtheorem{remark}{\bf \emph{\underline{Remark}}}
\def\({\left(}
\def\){\right)}
\def\b0{{\mathbf{0}}}
\newcommand{\nn}{\nonumber}
\begin{document}
\captionsetup[figure]{name={Fig.}}

\title{\huge SWIPT in Mixed Near- and Far-Field Channels: Joint Beam Scheduling and Power Allocation
} 
\author{Yunpu~Zhang,~\IEEEmembership{Student~Member,~IEEE}, and Changsheng~You,~\IEEEmembership{Member,~IEEE} 
\thanks{
	This work was supported by the National Natural Science Foundation
	of China under Grant 62201242, 62331023, and Young Elite Scientists
	Sponsorship Program by CAST 2022QNRC001. Part of this work will be presented at the 2023 IEEE Global Communications Conference, Kuala Lumpur, Malaysia \cite{zhang2023joint}. \emph{(Corresponding author: Changsheng You.)}
	
	Y. Zhang and C. You are with the Department of Electronic and Electrical Engineering, Southern University of Science and Technology (SUSTech), Shenzhen 518055, China (e-mail: yunpu.zhang@my.cityu.edu.hk; youcs@sustech.edu.cn).}		
}

\maketitle

\begin{abstract} 
Extremely large-scale array (XL-array) has emerged as a promising technology to enhance the spectrum efficiency and spatial resolution in future wireless networks by exploiting massive number of antennas for generating pencil-like beamforming. This also leads to a fundamental paradigm shift from conventional far-field communications towards the new near-field communications. In contrast to the existing works that mostly considered simultaneous wireless information
and power transfer (SWIPT) in the far field, we consider in this paper a new and practical scenario, called \emph{mixed near- and far-field} SWIPT, where energy harvesting (EH) and information decoding (ID) receivers are located in the near- and far-field regions of the XL-array base station (BS), respectively.
Specifically, we formulate an optimization problem to maximize the weighted sum-power harvested at all EH receivers by jointly designing the BS beam scheduling and power allocation, under the constraints on the maximum sum-rate and BS transmit power.
First, for the general case with multiple EH and ID receivers, we propose an efficient algorithm to obtain a suboptimal solution by utilizing the binary variable elimination and successive convex approximation methods.
To obtain useful insights, we then study the joint design for special cases. In particular, we show that when there are multiple EH receivers and one ID receiver, in most cases, the optimal design is allocating a portion of power to the ID receiver for satisfying the rate constraint, while the remaining power is allocated to one EH receiver with the highest EH capability. This is in sharp contrast to the conventional far-field SWIPT case, for which all powers should be allocated to ID receivers.
Numerical results show that our proposed joint design significantly outperforms other benchmark schemes without the optimization of beam scheduling and/or power allocation.
\end{abstract}
\begin{IEEEkeywords}
Simultaneous wireless information and power transfer (SWIPT), extremely large-scale array (XL-array), mixed near- and far-field channels, beam scheduling, power allocation.
\end{IEEEkeywords}

\section{Introduction}
Future sixth-generation (6G) wireless networks are envisioned to support ever-increasing demands for ultra-high data rate, hyper-reliability, extremely-low latency, etc \cite{8869705,9136592}. These requirements, however, may not be fully achieved by existing fifth-generation (5G) technologies,  thus calling for developing new technologies for 6G. For example, it is expected that 6G communications will migrate to higher frequency bands, such as millimeter-wave (mmWave) and terahertz (THz), to exploit the enormous available bandwidth \cite{7959169,NBAOMP}. Besides, by increasing the number of antennas drastically to another order-of-magnitude, extremely large-scale array/surface (XL-array/surface) has been envisioned as a promising technology to achieve super-high spectrum efficiency and spatial resolution \cite{10041977,9903389,9326394,9140329,9424177,9617121}, which greatly compensates the severe path-loss in high-frequency bands.

With the significant increase of carrier frequency and number of antennas, the well-known \emph{Rayleigh distance} is going to expand to dozens or even hundreds
of meters. This thus leads to a fundamental paradigm shift in the electromagnetic (EM) field characteristics, from the conventional far-field communications towards the new \emph{near-field communications} \cite{eamaz2023near,liu2023near}.
 Specifically, different with far-field communications whose EM field can be approximated as planar waves, the radio propagation in near-field communications should be accurately modeled by spherical waves. This unique channel characteristic enables a new function of \emph{beam focusing}, which focuses the beam energy on a specific location (region),  rather than a specific
direction as in conventional far-field communications \cite{9738442}. 
To exploit the beam-focusing gain of XL-arrays, near-field channel estimation/beam training is indispensable, which, however, is much more challenging than that in far-field communications. For example, the near-field beam training requires a two-dimensional beam
search over both the angular and distance (or range) domains, which significantly differs from the far-field case that  requires one-dimensional angular-domain beam search only.
{\color{black}As such, applying the discrete Fourier transform (DFT)-based far-field
codebook for near-field beam training will lead to degraded training accuracy and rate performance. This can be explained by the so-called \emph{near-field energy-spread} effect, where the energy of a far-field beam steered towards a specific direction  will be dispersed into multiple angles in the near-field, hence making it difficult to determine the true user angle based on the maximum received signal power. To address this issue, a new \emph{polar-domain} codebook was proposed in \cite{9693928}, with each beam
codeword pointing towards a specific location with a targeted
angle and distance. One straightforward beam training method using this codebook is to conduct a two-dimensional exhaustive search over all possible beam codewords. However, this method incurs prohibitively high beam training overhead and hence leaving less time
for data transmission. To reduce the training overhead, an efficient two-phase beam training method was proposed in \cite{9913211} that first estimates the spatial angle and then determines the user distance. This method relies on a key fact that the true
user spatial angle approximately lies in the middle of
a dominant angular region with sufficiently high received signal powers
when using far-field DFT beams for beam training. The required training overhead can be further reduced by designing hierarchical near-field beam training \cite{wu2023two} and exploiting deep learning methods for beam training \cite{9903646}.}

{While the existing works have mainly considered either the near- or far-field communications, the \emph{mixed} near- and far-field communications are also likely to appear, where
there exist both near- and far-field users in the network \cite{you2023near,han2023cross,yu2022adaptive,9598863,10129111,zhang2023mixed}. For instance, consider a base station (BS) that has an 
antenna of diameter $0.5$ meter (m) and operates at a frequency of $30$ GHz. The well-known Rayleigh
distance in this scenario is about $50$ m. This means that in a typical communication
scenario, the users may be located in both the near- and far-field regions from the BS, hence leading to more complicated design issues such as mixed/hybrid-field channel estimation \cite{yu2022adaptive,9598863}, coexistence and interference management of mixed-field users \cite{10129111,zhang2023mixed}, etc.} To be more specific, an interesting observation was revealed in \cite{zhang2023mixed} that due to the energy-spread effect, the near-field user may suffer strong interference from the DFT-based far-field beam,
when its spatial angle is in the neighborhood of the far-field user angle. On the other hand, such power leakage from the DFT-based far-field beam can also be utilized to efficiently charge the near-field user, leading to the new application of mixed-field simultaneous wireless information and power transfer (SWIPT) where energy harvesting (EH) and information decoding (ID) receivers are located in the near- and far-field, respectively. {However, the design of the mixed-field SWIPT also faces several new challenges. For example, by exploiting the near-field beam-focusing property, the beamforming for the near-field EH receivers should be delicately designed to maximize the EH efficiency, while at the same time, minimize the interference to the far-field ID receivers. Second, the energy-spread effect should be accounted for when designing the beamforming for the far-field ID receivers, which can opportunistically charge the near-field EH receivers when they are located in similar angles. Moreover, the power allocation of the BS should be carefully designed to balance the new \emph{near-and-far} beamforming tradeoff in the mixed SWIPT, {with the effects of both beam focusing and energy spread taken into account.}}



\begin{figure}[t]
	\centering
	\includegraphics[width= 0.4\textwidth]{./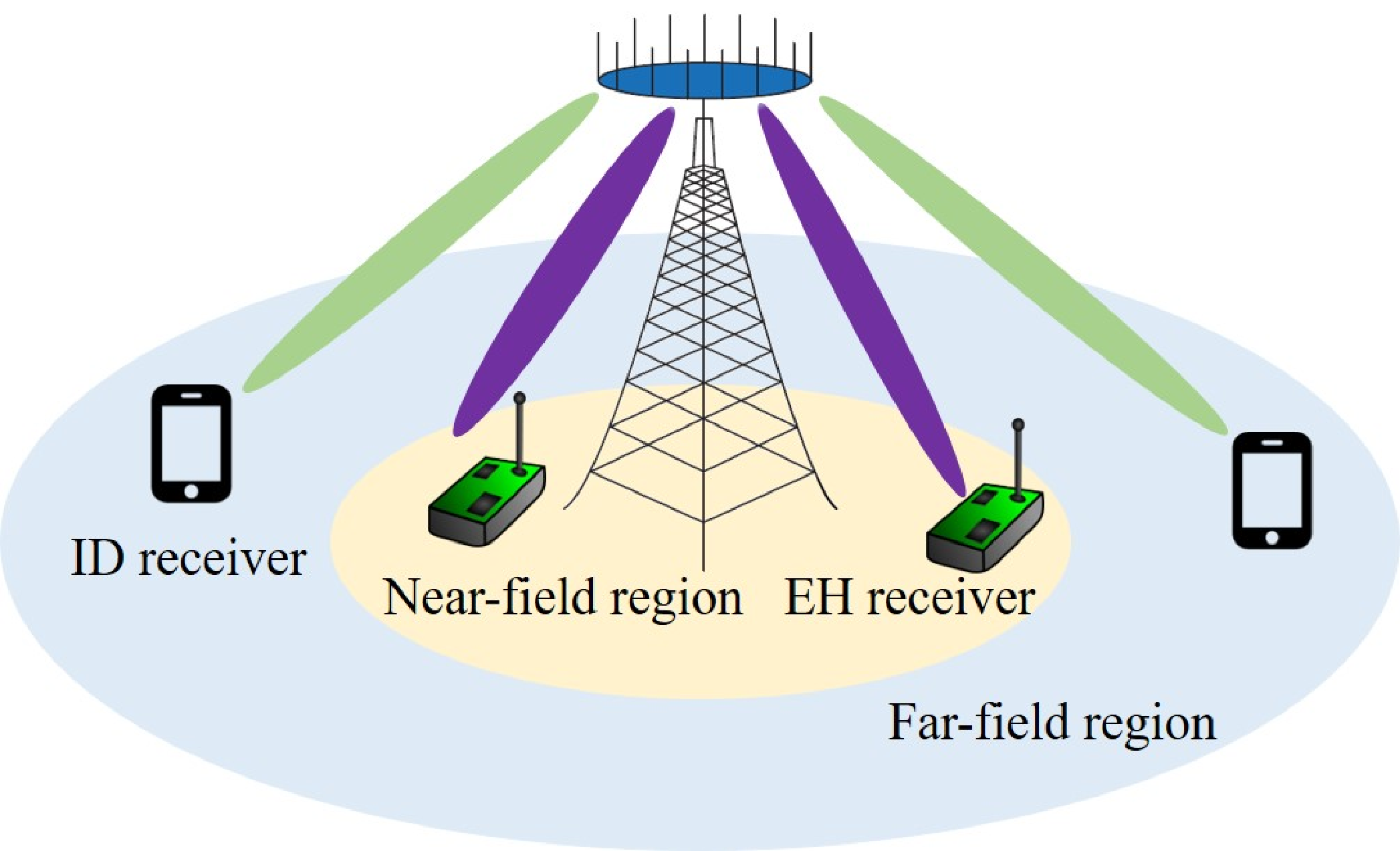}
	\caption{A mixed near- and far-field SWIPT system.}\label{fig:SM}
\end{figure}

To tackle the above issues, we consider in this paper a new \emph{mixed near- and far-field} SWIPT system as shown in Fig. 1, in which the BS equipped with an XL-array simultaneously serves multiple EH and ID receivers, which are located in the near- and far-field regions of the XL-array, respectively. We study the joint BS beam scheduling and power allocation design, where the BS is assumed to simultaneously steer multiple beams towards selected EH and ID receivers with properly designed power allocation. 
The main contributions of this paper are summarized as follows.
\begin{itemize}
	\item {First, to our best knowledge, we are the first one to study the SWIPT in mixed near- and far-field channels, for which we formulate an optimization problem to maximize the weighted sum-power harvested at all EH receivers subject to the constraints on the sum-rate and BS transmit power.} However, this problem is non-convex due to the binary
	beam-scheduling variables as well
	as the intrinsic coupling between beam scheduling and power allocation. To address this issue, we reformulate this problem into a more compact form by using a binary variable elimination method. 
	 Moreover, we show that in the mixed-field SWIPT, the signals for far-field ID receivers can opportunistically charge the near-field EH receivers within a certain angular region, while they receive limited interference from EH receivers as they are far from the BS. 
	\item {Second, for the general case, we propose an efficient algorithm to obtain a suboptimal solution by using the successive convex approximation (SCA) method. In particular, we shed important insights into the joint beam scheduling and power allocation design for some special cases. Specifically, for the case with ID receivers only, the optimal design is shown to exhibit a \emph{water-filling} structure. Next, for the case with EH receivers only, it is shown that allocating all power to the one EH receiver with the highest EH capacity is optimal. Moreover, for the case with multiple EH receivers and one ID receiver, we unveil that the optimal design for the mixed-field SWIPT is allocating a portion of power to the ID receiver for satisfying its rate constraint, while the remaining power is allocated to one EH receiver with the highest EH capability. This is in sharp contrast to the conventional far-field SWIPT case where all powers should be allocated to ID receivers.}
	\item Last, numerical results are provided to demonstrate the effectiveness of the proposed beam scheduling and power allocation scheme under various system setups. It is shown that the proposed scheme significantly outperforms other benchmark schemes with the optimizations of beam scheduling and power allocation, especially when the total transmit power is high and/or the required sum-rate is large. 
%
\end{itemize}

The remainder of this paper is organized as follows. Sections \ref{SMandPF} presents the system
model and problem formulation for the mixed-field SWIPT system. In Section \ref{PRF},  we reformulate the weighted sum-power maximization problem into a more compact form to facilitate the algorithm design. In Section \ref{SCandGC}, we first develop an efficient algorithm to obtain a suboptimal solution for the general case, and then consider three special cases to gain useful insights. Numerical results are presented in Section \ref{SE:NR} to
evaluate the performance of the proposed scheme, followed by the
conclusions given in Section \ref{CL}.

\emph{Notations:}
Upper-case and lower-case boldface letters denote matrices and column vectors,
respectively. Upper-case calligraphic letters (e.g., $\mathcal{K}$) denote discrete and finite sets.
The superscripts $(\cdot)^{H}$ and $(\cdot)^{T}$ stand for the 
transpose and Hermitian transpose, respectively. 
For matrices, $[\cdot]_{i,j}$ denotes the $(i,j)$-th element a matrix, while $[\cdot]_{i}$ refers to the $i$-th element a vector.
$|\cdot|$ denotes the absolute value for a real number and the cardinality for a set. $\mathcal{O}(\cdot)$ denotes the standard big-O notation. $\mathbf{1}_{N}$ and $\mathbf{0}_{N}$ denote an all-one vector of length $N$ and an all-zero vector of length $N$, respectively. 

\section{System Model and Problem Formulation}\label{SMandPF}
\subsection{System Model}
We consider a mixed-field SWIPT system as shown in Fig.~\ref{fig:SM}, where a BS equipped with an $N$-antenna XL-array simultaneously serves multiple EH and ID receivers. Specifically, 
$K$ single-antenna EH receivers, denoted by $\mathcal{K}=\{1,2,\cdots,K\}$, are located in the near-field region of the XL-array for enabling high-power energy harvesting, while $M$ single-antenna ID receivers, denoted by $\mathcal{M}=\{1,2,\cdots,M\}$, are assumed to lie in the far-field region of the XL-array with the BS-ID receiver distances larger than the Rayleigh distance, defined as $Z=\frac{2D^2}{\lambda}$ with $D$ and $\lambda$ denoting the antenna array aperture and carrier wavelength, respectively.\footnote{{For the case where ID receivers locate within the Rayleigh distance, the corresponding sum-rate maximization problem has been investigated in the literature, see, e.g.,  \cite{9738442}.} Besides, consider the following two scenarios. 1) Both EH and ID receivers are in the near field: in this case, the obtained insights in this paper for mixed-field SWIPT may no longer hold for near-field SWIPT due to the vanish of the energy-spread effect and the corresponding near-and-far issue, whose design is left for future work. 2) EH receivers are in the far field and ID receivers are in the near field: in this case, the EH efficiency of far-field EH receivers will drop drastically caused by the long distance, and the near-field ID receivers will suffer severe interference incurred by the far-field beams, which thus is in general impractical.} {For the XL-array BS, it applies the hybrid beamforming architecture to serve $(K+M)$ EH and ID receivers with $N_{\rm RF}$ radio frequency (RF)
	chains, where $N_{\rm RF}\ge K+M$.}

\subsubsection{Near- and Far-Field Channel Models}
In the following, we introduce the channel models
for the near-field EH receivers and far-field ID receivers, respectively.

\underline{\bf Near-field channel model for EH receivers:}
For the near-field EH receiver $k$, its channel from the XL-array BS can be modeled as
\begin{equation}\label{Eq:mp}
	\mathbf{h}^{\rm EH}_{k}=	\mathbf{h}^{\rm EH}_{{\rm LoS}, k}+	\sum_{\ell=1}^{L_{k}}\mathbf{h}^{\rm EH}_{{\rm NLoS}, k, \ell}, ~~~k\in\mathcal{K},
\end{equation}  
where there exist one line-of-sight  (LoS) path and $L_{k}$  non-LoS (NLoS) paths between the BS and EH receiver $k$.
\begin{figure}[t]
	\centering
	\includegraphics[width= 0.45\textwidth]{./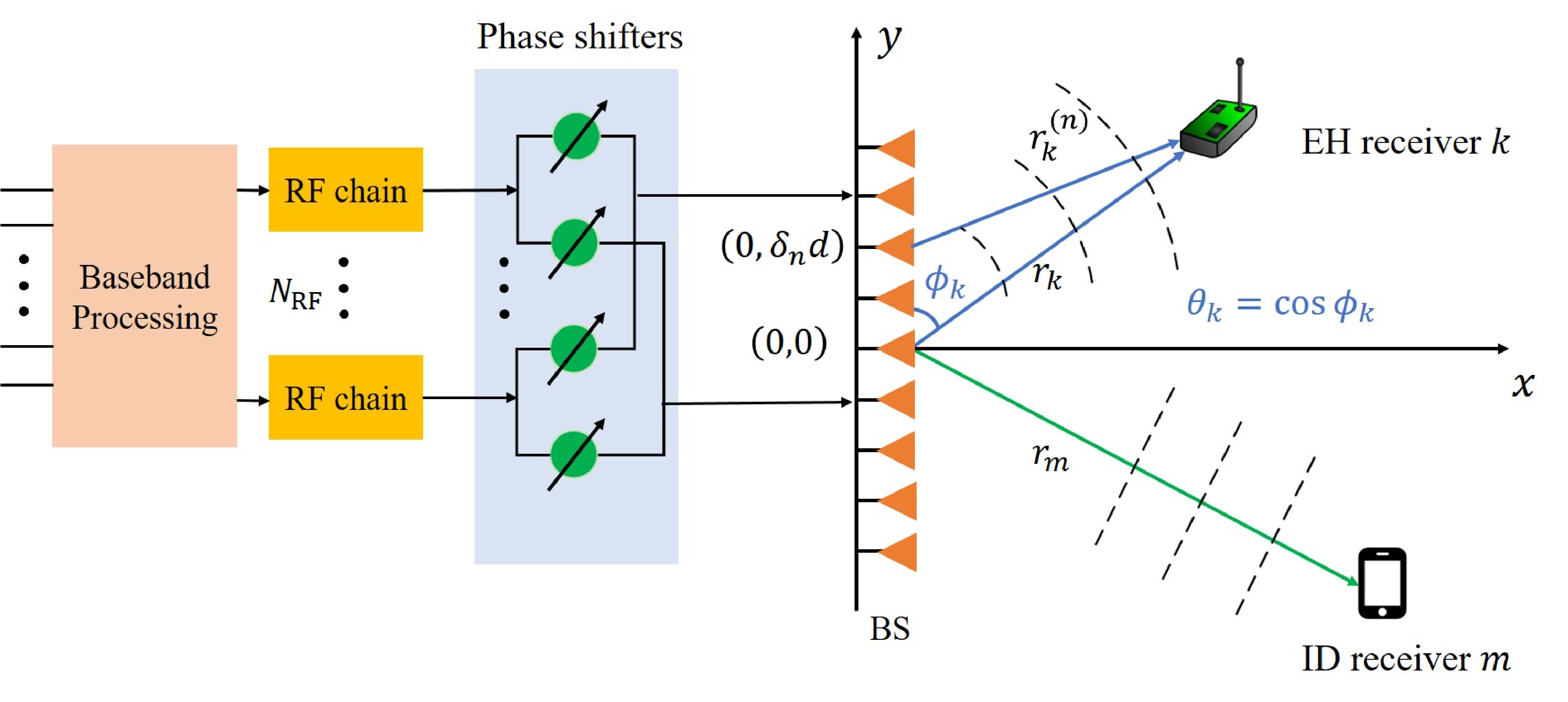}
	\caption{Illustration of the near and far-field channel models.}\label{fig:An}
\end{figure}
In this paper, we consider SWIPT in high-frequency bands (e.g., mmWave and THz), whose channels are susceptible to blockage. Therefore, we mainly consider the LoS channel component for both EH and ID receivers, while the NLoS components are neglected due to small power \cite{9957130,liu2022low,zhang2022near}. As such, the channel from the BS to EH receiver $k$ can be approximated as $\mathbf{h}^{\rm EH}_{k} \approx \mathbf{h}^{\rm EH}_{{\rm LoS}, k}$, which is modeled as follows. {First, without loss of generality, the center of the XL-array BS is assumed to lie in the origin of the coordinate system. Accordingly,  as shown in Fig.~\ref{fig:An}, the coordinate of the $n$-th antenna of the XL-array is located at ($0, \delta_{n}d$) m, where $\delta_{n}=\frac{2n-N+1}{2}$ with $n=0,1,\cdots,N-1$ and $d=\frac{\lambda}{2}$ represents the antenna spacing.} Therefore, based on the near-field spherical wavefront model, the distance between the $n$-th antenna at the BS and EH receiver $k$ is given by
\begin{align}\label{Eq:rrr}
	r^{(n)}_{k}&=\sqrt{r_{k}^2+\delta_{n}^2d^2-2r_{k}\theta_{k} \delta_{n}d},
\end{align}
where $r_{k}$ denotes the distance between the BS antenna center and EH receiver $k$, and $\theta_{k}=2d\cos(\phi_{k})/ \lambda$ denotes the spatial angle at the BS with $\phi_{k}$ denoting the physical angle-of-departure (AoD) from the BS center to EH receiver $k$. Then, by using the second-order Taylor expansion $\sqrt{1+x}\approx1+\frac{1}{2}x-\frac{1}{8}x^2$, $r^{(n)}_{k}$ in \eqref{Eq:rrr} can be approximated as $r^{(n)}_{k}\approx r_{k}-\delta_{n}d\theta_{k}+\frac{\delta_{n}^2d^2(1-\theta^2_{k})}{2r_{k}}$, which is accurate enough when the BS and EH receiver distance $r_{k}$ is smaller than the Rayleigh distance \cite{7942128}.
As such, the LoS channel between the $n$-th BS antenna and EH receiver $k$ can be modeled as $[\mathbf{h}^{\rm EH}_{k}]_{n}=h^{\rm EH}_{k,n}e^{-j 2 \pi(r^{(n)}_{k}-r_{k})/\lambda},$
where $h^{\rm EH}_{k,n}=\frac{\lambda}{4\pi r^{(n)}_{k}}$ is the antenna-wise complex-valued channel gain of EH receiver $k$. Moreover, we assume that the EH receivers are located in the radiative  Fresnel region for which $r_{k}>r_{\rm min}=\max\{\frac{1}{2}\sqrt{\frac{D^3}{\lambda}},1.2D\}$ \cite{bjornson2021primer,7942128}. This condition can be easily satisfied in practice. For example, for an  
XL-array BS which is equipped with an antenna of diameter $D=0.5$ m and operates at a frequency of $30$ GHz, $r_{k}$ is very likely to be larger than $r_{\rm min}=\max\{\frac{1}{2}\sqrt{\frac{D^3}{\lambda}},1.2D\}=1.768$ m. Under the above assumption, we have $h^{\rm EH}_{k,1}\approx h^{\rm EH}_{k,2}\cdots\approx h^{\rm EH}_{k,N}\triangleq h^{\rm EH}_{k}=\frac{\lambda}{4\pi r_{k}}$, where $h^{\rm EH}_{k}$ is the common complex-valued channel gain for different antennas \cite{9738442}.

Based on the above, the LoS-dominant near-field channel from the BS to EH receiver $k$ can be simply modeled as
\begin{equation}
	\mathbf{h}^{\rm EH}_{k}\approx\sqrt{N}h^{\rm EH}_{k} \mathbf{b}(\theta_{k},r_{k}), ~~~k\in\mathcal{K},
\end{equation}  
where $\mathbf{b}(\theta_{k},r_{k})$ denotes the (normalized) near-field channel steering vector, given by
\begin{equation}\label{near_steering}
	\mathbf{b}\left(\theta_{k}, r_{k}\right)\!=\!\!\frac{1}{\sqrt{N}}\!\left[e^{-j 2 \pi(r^{(0)}_{k}-r_{k})/\lambda}, \cdots, e^{-j 2 \pi(r^{(N-1)}_{k}-r_{k})/\lambda}\right]^{T}.
\end{equation}

{\color{black}
\underline{\bf Far-field channel model for ID receivers:} For each ID receiver that is located in the far-field of the BS, say ID receiver $m$, its channel from the BS can be characterized as below based on the planar wavefront propagation model\cite{10159017}
\begin{equation}
	\mathbf{h}^{\rm ID}_{m}=\mathbf{h}^{\rm ID}_{{\rm LoS}, m} +\sum_{\ell=1}^{L_{m}}\mathbf{h}^{\rm ID}_{{\rm NLoS}, m, \ell}, ~~~m\in\mathcal{M},
\end{equation}
where there exist one LoS path and $L_{m}$ NLoS paths between the BS and ID receiver $m$.
By ignoring the negligible NLoS components in high-frequency bands, the BS$\to$ID receiver $m$ channel can be approximated by its LoS component, modeled as
\begin{equation}
	\mathbf{h}^{\rm ID}_{m}\approx\sqrt{N}h^{\rm ID}_{m} \mathbf{a}(\theta_{m}), ~~~m\in\mathcal{M},
\end{equation}
where $h^{\rm ID}_{m}=\frac{\lambda}{4\pi r_{m}}$ represents the complex-valued channel gain of ID receiver $m$. In addition, $\mathbf{a}(\theta_{m})$ denotes the (normalized) far-field channel steering vector, given by
\begin{equation}\label{far_steering}
	\mathbf{a}(\theta_{m})\triangleq \frac{1}{\sqrt{N}}\left[1, e^{j \pi \theta_{m}},\cdots, e^{j \pi (N-1)\theta_{m}}\right]^T,
\end{equation}
where $\theta_{m}=2d\cos(\phi_{m})/ \lambda$ denotes the spatial angle at the BS with $\phi_{m}$ denoting the physical AoD from the BS center to ID receiver $m$.}
\subsubsection{Signal Model}
	Let $x^{\rm EH}_{k}$, $k\in\mathcal{K}$ denote the transmitted energy-carrying signal for each EH receiver $k$ with power $P^{\rm EH}_{k}$ and $x^{\rm ID}_{m}$, $m\in\mathcal{M}$ the transmitted information-bearing signal for each ID receivers $m$  with power $P^{\rm ID}_{m}$. Then, by applying hybrid beamforming, the transmitted signal vector by the BS is given by
\begin{equation}
	\bar{\mathbf{x}}=\mathbf{F}_{\rm A}\mathbf{F}_{\rm D}\mathbf{x},
\end{equation}
where $\mathbf{x}=[x_{1}^{\rm EH},\cdots,x_{K}^{\rm EH},x_{1}^{\rm ID},\cdots,x_{M}^{\rm ID}]^{T}$, $\mathbf{F}_{\rm D}$ represents a $(K+M)\times(K+M)$ digital precoder and {$\mathbf{F}_{\rm A}=[\mathbf{v}^{\rm EH}_{1},\cdots,\mathbf{v}^{\rm EH}_{K},\mathbf{v}^{\rm ID}_{1},\cdots,\mathbf{v}^{\rm ID}_{M}]$ denotes an $N\times(K+M)$ analog precoder with $\mathbf{v}^{\rm EH}_{k}$ and $\mathbf{v}^{\rm ID}_{m}$ representing the analog beamforming vector for EH receiver $k$ and ID receiver $m$, respectively.}
 In this paper, to obtain useful insights as well as reduce the hardware cost of the XL-array, we mainly consider the purely analog beamforming design to evaluate the performance gain achieved by analog beamforming only, for which the digital precoder is set as an identity matrix, i.e., $\mathbf{F}_{\rm D}=\mathbf{1}_{(K+M),(K+M)}$ \cite{8642953}.
 {It is worth noting that, to further improve the performance, the weighted minimum mean square error (WMMSE) or zero-forcing (ZF) based digital beamforming can be properly designed given the analog precoder $\mathbf{F}_{\rm A}$ \cite{9786780,7307218}. The corresponding performance of hybrid beamforming will be evaluated by simulations in Section~\ref{DBBB}.}

Let $\mathcal{D}\triangleq  \{s^{\rm EH}_{1},\cdots,s^{\rm EH}_{K},s^{\rm ID}_{1},\cdots,s^{\rm ID}_{M}\}\in\mathbb{C}^{K+M}$ denote the beam-scheduling indicator set for the XL-array BS, where $s^{\rm EH}_{k}$ and $s^{\rm ID}_{m}$ denote respectively the binary scheduling variable for each EH receiver $k$ and ID receiver $m$. Specifically, $s^{\rm EH}_{k}=1$ if EH receiver $k$ is scheduled by the BS and $s^{\rm EH}_{k}=0$ otherwise; while $s^{\rm ID}_{m}$ is defined in a way similar to $s^{\rm EH}_{k}$.
%

\underline{\bf Signal model for far-field ID receivers:}
Consider the data transmission to	 a far-field ID receiver $m$. Its received signal is given by 
\begin{align}
	y^{\rm ID}_{m}=(\mathbf{h}^{\rm ID}_{m})^H&\mathbf{v}^{\rm ID}_{m}s^{\rm ID}_{m}{x}^{\rm ID}_{m} +\underbrace{(\mathbf{h}^{\rm ID}_{m})^H\sum_{k=1}^{K}\mathbf{v}^{\rm EH}_{k}s^{\rm EH}_{k}{x}^{\rm EH}_{k}}_{\rm \textbf{Interference from EH signals}}\nn\\ 
	&+ \underbrace{(\mathbf{h}^{\rm ID}_{m})^H\sum^{M}_{j=1,j\neq m}\mathbf{v}^{\rm ID}_{j}s^{\rm ID}_{j}{x}^{\rm ID}_{j}}_{\rm \textbf{Interference from other ID signals}}+z^{\rm ID}_{m},
\end{align}
where $z^{\rm ID}_{m}$ is the received additive white Gaussian noise (AWGN) at ID receiver ${m}$ with zero mean and power $\sigma^2_{m}$.
As such, the received signal-to-interference-plus-noise ratio
(SINR) at ID receiver ${m}$ is given by \eqref{Eq:SINR} on the bottom of the next page,
\begin{figure*}[b]
	\hrulefill
	\begin{equation}\label{Eq:SINR}
		{\rm SINR}^{\rm ID}_{m}=\frac{s^{\rm ID}_{m}P^{\rm ID}_{m}g^{\rm ID}_{m}|\mathbf{a}^{H}\left(\theta_{m}\right)\mathbf{v}^{\rm ID}_{m}|^2}{\sum^{K}_{k=1}s^{\rm EH}_{k}P^{\rm EH}_{k}g^{\rm ID}_{m}|\mathbf{a}^{H}\left(\theta_{m}\right)\mathbf{v}^{\rm EH}_{k}|^2+\sum^{M}_{j=1,j	\neq m}s^{\rm ID}_{j}P^{\rm ID}_{j}g^{\rm ID}_{m}|\mathbf{a}^{H}\left(\theta_{m}\right)\mathbf{v}^{\rm ID}_{j}|^2+\sigma^2_{m}},
	\end{equation}
\end{figure*}
where $g^{\rm ID}_{m}=N|h^{\rm ID}_{m}|^2$. The corresponding achievable rate in bits per second per Hertz (bps/Hz) is given by $R^{\rm ID}_{m}=\log_2\left(1+{\rm SINR}^{\rm ID}_{m}\right).$

\underline{\bf Signal model for near-field EH receivers:}
For wireless power transfer (WPT), due to the broadcast property of wireless channels, each EH receiver can harvest wireless energy from both the energy and information signals. As a result, by ignoring the negligible noise power at the
EH receivers and assuming the linear EH model,\footnote{{For simplicity, we consider the widely used linear EH model in the existing literature \cite{6489506,6860253,8941080,9110849}, while the extension to the case under a nonlinear EH model will be discussed in Remark \ref{Nonlinear}.}} the harvested power at EH receiver $k$ is given by
\begin{align}\label{Eq:HP}
	\begin{aligned}
	Q_{k}=\zeta\bigg(s^{\rm EH}_{k}|(&\mathbf{h}^{\rm EH}_{k})^H\mathbf{v}^{\rm EH}_{k}{x}^{\rm EH}_{k}|^2\\
	&+\underbrace{\sum_{i=1,i	\neq k}^{K}s^{\rm EH}_{i}|(\mathbf{h}^{\rm EH}_{k})^H\mathbf{v}^{\rm EH}_{i}{x}^{\rm EH}_{i}|^2}_{\rm \textbf{Harvested power from other EH signals }}\\
	&+\underbrace{\sum_{m=1}^{M}s^{\rm ID}_{m}|(\mathbf{h}^{\rm EH}_{k})^H\mathbf{v}^{\rm ID}_{m}{x}^{\rm ID}_{m}|^2}_{\rm \textbf{Harvested power from ID signals}}\bigg),
\end{aligned}
\end{align}
where $0<\zeta \leq 1$ denotes the energy harvesting efficiency.

\subsection{Problem Formulation}

In this paper, we assume that the BS has perfect channel state information (CSI) of all EH and ID receivers, i.e., near- and far-field channel steering vectors $\mathbf{a}$ and $\mathbf{b}$ as well as channel gains $g^{\rm EH}_{k}$ and $g^{\rm ID}_{m}$. {In practice, this CSI can be efficiently obtained by using existing far-field and near-field channel estimation and beam training methods (see, e.g., \cite{8949454,9129778,9913211,9133142}).} In addition, for ease of implementation, we assume that if an EH or ID receiver is scheduled, the BS will steer a beam towards it to maximize its received power based on e.g., codebook based beamforming, i.e., $\mathbf{v}^{\rm ID}_{m}=\mathbf{a}\left(\theta_{m}\right)$ and $\mathbf{v}^{\rm EH}_{k}=\mathbf{b}\left(\theta_{k}, r_{k}\right)$.\footnote{\color{black}{In this paper, we mainly study the beam scheduling design and thus assume the MRT-based analog beamforming, while the general analog beamforming design deserves further investigation in the future.}} Under the above assumptions, the achievable rate of each ID receiver ${m}$ can be rewritten as \eqref{Eq:sum_rate} shown on the bottom of the next page,
\begin{figure*}[b]
	\hrulefill
	\begin{align}\label{Eq:sum_rate}
		\begin{aligned}
			R^{\rm ID}_{m}&\left(\mathcal{D},\{P^{\rm EH}_{k}\},\{P^{\rm ID}_{m}\}\right)=\log_2\left(1+{\rm SINR}^{\rm ID}_{m}\right)\\
			&=\log_2\left(1+\frac{s^{\rm ID}_{m}P^{\rm ID}_{m}g^{\rm ID}_{m}}{\sum^{K}_{k=1}s^{\rm EH}_{k}P^{\rm EH}_{k}g^{\rm ID}_{m}|\mathbf{a}^{H}\left(\theta_{m}\right)\mathbf{b}\left(\theta_{k}, r_{k}\right)|^2+\sum^{M}_{j=1,j	\neq m}s^{\rm ID}_{j}P^{\rm ID}_{j}g^{\rm ID}_{m}|\mathbf{a}^{H}\left(\theta_{m}\right)\mathbf{a}\left(\theta_{j}\right)|^2+\sigma^2_{m}}\right),
		\end{aligned}
	\end{align}
\end{figure*}
and the harvested power at each EH receiver $k$ is given by 
	\begin{align}\label{Eq:objecfuncx}
	\begin{aligned}
		Q_{k}&\left(\mathcal{D},\{P^{\rm EH}_{k}\},\{P^{\rm ID}_{m}\}\right)=
		\zeta\bigg(s^{\rm EH}_{k}P^{\rm EH}_{k}g^{\rm EH}_{k}\\
		&+\sum_{i=1,i	\neq k}^{K} s^{\rm EH}_{i}P^{\rm EH}_{i}g^{\rm EH}_{k}|(\mathbf{b}^H(\theta_{k},r_{k})\mathbf{b}(\theta_{i},r_{i})|^2\\
		&+\sum_{m=1}^{M}s^{\rm ID}_{m}P^{\rm ID}_{m}g^{\rm EH}_{k}|\mathbf{b}^H(\theta_{k},r_{k})\mathbf{a}\left(\theta_{m}\right)|^2\bigg),
	\end{aligned}
\end{align}
where $g^{\rm EH}_{k}=N|h^{\rm EH}_{k}|^2$.

Our objective is to jointly optimize the beam scheduling (i.e., $\mathcal{D}$) and power allocation (i.e., $\{P^{\rm EH}_{k}\}$ and $\{P^{\rm ID}_{m}\}$) of the XL-array BS for maximizing the weighted sum-power harvested at all EH receivers, subject to a sum-rate constraint for all ID receivers and a total BS transmit power constraint.
Let $\alpha_{k} \ge 0$ denote a predefined power weight for each EH receiver $k$, where a larger value of $\alpha_{k}$ indicates higher preference for transferring energy to EH receiver $k$, as compared to other EH receivers. As such, the weighted sum-power transferred to all EH receivers, denoted by $Q$, can be expressed as
\begin{equation}\label{Eq:sum-power}
	\!Q\!\left(\mathcal{D},\{P^{\rm EH}_{k}\},\{P^{\rm ID}_{m}\}\right) \!=\!\sum_{k=1}^{K}\!\alpha_{k}	Q_{k}\!\left(\mathcal{D},\{P^{\rm EH}_{k}\},\{P^{\rm ID}_{m}\}\right).
\end{equation}
Based on the above, this optimization problem can be formulated as follows
 \begin{subequations}
 	\begin{align}
 		({\bf P1}):\max_{\substack{\mathcal{D},\{P^{\rm EH}_{k}\},\\\{P^{\rm ID}_{m}\}} }  &~~	Q\left(\mathcal{D},\{P^{\rm EH}_{k}\},\{P^{\rm ID}_{m}\}\right)
 		\label{Eq:Or_OJ}\\
 		\text{s.t.}~~~
 		&~	\sum_{m=1}^{M}R^{\rm ID}_{m}\left(\mathcal{D},\{P^{\rm EH}_{k}\},\{P^{\rm ID}_{m}\}\right)\ge R,\label{C:sum-rate}
 		\\
 		&~
 		s^{\rm EH}_{k},s^{\rm ID}_{m} \in \{0,1\},~~ k\in \mathcal{K},m\in \mathcal{M},\label{C:binary}\\
 		&~
 		 \sum_{k=1}^{K}s^{\rm EH}_{k}P^{\rm EH}_{k}+\sum_{m=1}^{M}s^{\rm ID}_{m}P^{\rm ID}_{m}\le P_{0},\label{C:Overall_P}\\
 		 &~
 		 P^{\rm EH}_{k} \ge 0, P^{\rm ID}_{m}\ge 0,~k\in \mathcal{K},m\in \mathcal{M},\label{C:nonnegative}
 	\end{align}
 \end{subequations}
{\color{black}where \eqref{C:sum-rate} indicates the sum-rate constraint for all ID receivers; \eqref{C:binary} denotes the binary beam-scheduling indicator for each EH receiver $k$ and ID receiver $m$; {\eqref{C:Overall_P} is the transmit power constraint for the BS with $P_{0}$ denoting the maximum transmit power.}}

Problem (P1) is a mixed-integer optimization problem due to the binary beam-scheduling optimization variables $\mathcal{D}$ in \eqref{C:binary} and the continuous power-allocation variables $\{P^{\rm EH}_{k}\}$, $\{P^{\rm ID}_{m}\}$ in \eqref{C:Overall_P}. Moreover, the beam scheduling and power allocation optimization are strongly coupled in the objective function and the constraints in \eqref{C:sum-rate}--\eqref{C:Overall_P}, which renders problem (P1) more difficult to be optimally solved. Among others, one straightforward method to solve (P1) is by first enumerating all possible EH and ID scheduling combinations with optimized power allocation and then selecting the optimal desgin. However, enumerating all beam scheduling designs
will incur a computational complexity of $2^{K+M}$, which exponentially increases with $K+M$ and thus is unaffordable in practice, especially for a large-scale mixed-field SWIPT system with a large number of EH and ID receivers. To tackle these difficulties, we propose an efficient algorithm in this paper to obtain a high-quality solution to problem (P1) and shed useful insights into the joint beam scheduling and power allocation design.

\section{Problem Reformulation}\label{PRF}
In this section, we reformulate problem (P1) into a more compact form to facilitate the subsequent algorithm design. 

\subsection{Eliminating the Binary Optimization Variables, $s^{\rm EH}_{k}$ and $s^{\rm ID}_{m}$}
One of the main challenges in solving problem (P1) arises from the intrinsic coupling between the binary scheduling variables (i.e., $s^{\rm EH}_{k}$ and $s^{\rm ID}_{m}$) and transmit power allocation (i.e., $\{P^{\rm EH}_{k}\}$ and $\{P^{\rm ID}_{m}\}$), which appear in both the objective function and constraints. To address this issue, we introduce the following continuous variables to eliminate the binary optimization variables
\begin{equation}\label{b_s_e}
	\tilde{P}^{\rm EH}_{k} = s^{\rm EH}_{k}P^{\rm EH}_{k}, ~~~~ \tilde{P}^{\rm ID}_{m} = s^{\rm ID}_{m}P^{\rm ID}_{m}.
\end{equation}
As such, the constraint \eqref{C:sum-rate} can be equivalently transformed into the following form
\begin{align}\label{Eq:C:sum-rate_t}
\!\!\sum_{m=1}^{M}\!R^{\rm ID}_{m}(\mathcal{D},\!\{P^{\rm EH}_{k}\},\!\{P^{\rm ID}_{m}\})\!=\!\sum_{m=1}^{M}\!\tilde{R}^{\rm ID}_{m}(\{\tilde{P}^{\rm EH}_{k}\},\{\tilde{P}^{\rm ID}_{m}\}),
\end{align}
where $\tilde{R}^{\rm ID}_{m}(\{\tilde{P}^{\rm EH}_{k}\},\{\tilde{P}^{\rm ID}_{m}\})$ is given in \eqref{NewR} on the bottom of the next page.
\begin{figure*}[b]
	\hrulefill
\begin{align}\label{NewR}
	\tilde{R}^{\rm ID}_{m}(\{\tilde{P}^{\rm EH}_{k}\},\{\tilde{P}^{\rm ID}_{m}\})\!=\!
	\log_2\left(\!1\!+\!\frac{\tilde{P}^{\rm ID}_{m}g^{\rm ID}_{m}}{\sum^{K}_{k=1}\tilde{P}^{\rm EH}_{k}g^{\rm ID}_{m}|\mathbf{a}^{H}\left(\theta_{m}\right)\mathbf{b}\left(\theta_{k}, r_{k}\right)|^2+\sum^{M}_{j=1,j	\neq m}\tilde{P}^{\rm ID}_{j}g^{\rm ID}_{m}|\mathbf{a}^{H}\left(\theta_{m}\right)\mathbf{a}\left(\theta_{j}\right)|^2+\sigma^2_{m}}\right).
\end{align}
\end{figure*}
Note that there is a one-to-one correspondence between $\tilde{P}^{\rm EH}_{k}$ and $\{s^{\rm EH}_k, {P}^{\rm EH}_{k}\}$. For example, if $\tilde{P}^{\rm EH}_{k}>0$, we have $s^{\rm EH}_k=1$ and ${P}^{\rm EH}_{k}= \tilde{P}^{\rm EH}_{k}$. Otherwise, if $\tilde{P}^{\rm EH}_{k}=0$, we have $s^{\rm EH}_k=0$ and ${P}^{\rm EH}_{k}=0$. 
As such, the objective function in \eqref{Eq:Or_OJ} can be expressed in a simpler form in \eqref{NewQ} on the bottom of the next page.
\begin{figure*}[b]
	\hrulefill
\begin{align}\label{NewQ}
\tilde{Q}(\{\tilde{P}^{\rm EH}_{k}\},\{\tilde{P}^{\rm ID}_{m}\}) = 
	\sum_{k=1}^{K}\alpha_{k}\zeta	\left(\tilde{P}^{\rm EH}_{k}g^{\rm EH}_{k}+\sum_{i=1,i	\neq k}^{K}\!\!\tilde{P}^{\rm EH}_{i}g^{\rm EH}_{k}|(\mathbf{b}^H(\theta_{k},r_{k})\mathbf{b}(\theta_{i},r_{i})|^2+\sum_{m=1}^{M}\tilde{P}^{\rm ID}_{m}g^{\rm EH}_{k}|\mathbf{b}^H(\theta_{k},r_{k})\mathbf{a}\left(\theta_{m}\right)|^2\right).
\end{align}
\end{figure*}
Similarly, the constraint in \eqref{C:Overall_P} can be rewritten as $\sum_{k=1}^{K}\tilde{P}^{\rm EH}_{k}+\sum_{m=1}^{M}\tilde{P}^{\rm ID}_{m}\le P_{0}.$

Based on the above, problem (P1) is equivalently expressed as follows
\begin{subequations}
	\begin{align}
		({\bf P2}):\max_{\substack{\{\tilde{P}^{\rm EH}_{k}\},\\\{\tilde{P}^{\rm ID}_{m}\}} }  &~\tilde{Q}(\{\tilde{P}^{\rm EH}_{k}\},\{\tilde{P}^{\rm ID}_{m}\}) \nn\\
		\text{s.t.}~
		&~ \sum_{m=1}^{M}\tilde{R}^{\rm ID}_{m}(\{\tilde{P}^{\rm EH}_{k}\},\{\tilde{P}^{\rm ID}_{m}\})\ge R,
		\\
		&~
		\sum_{k=1}^{K}\tilde{P}^{\rm EH}_{k}+\sum_{m=1}^{M}\tilde{P}^{\rm ID}_{m}\le P_0,\\
			&~ \tilde{P}^{\rm EH}_{k} \ge 0,\tilde{P}^{\rm ID}_{m}\ge 0, ~k\in \mathcal{K},m\in \mathcal{M}.\label{C:nonnegative1}
	\end{align}
\end{subequations}

\subsection{Correlation Evaluation between EH and ID Receivers}
Before solving problem (P2), we first study the correlation between the channels of EH and ID receivers, so as to shed useful insights into the beam scheduling and power allocation design for the mixed-field SWIPT. To this end, we first make a key definition below.
\begin{definition}\label{De:Correlation} 
	\emph{The correlation between any two near-field steering vectors is
		\begin{align}
			\eta(\theta_{p},\theta_{q},r_{p},r_{q})=|\mathbf{b}^{H}(\theta_{p},r_{p})\mathbf{b}(\theta_{q},r_{q})|.
	\end{align}}	
\end{definition}

\begin{remark}\label{Fresnel}
	\emph{Our prior work \cite{zhang2023mixed} studies the correlation between the near- and far-field steering vectors and reveals that the DFT-based far-field beams may cause strong interference to the near-field user even when they locate in different spatial regions. However, when taking a different view from the EH perspective, the power leakage from the DFT-based far-field beams to the near-field user can be used for charging EH devices efficiently as modeled in \eqref{Eq:HP}. Moreover, it is worth noting that the correlation between near-field steering vectors is a general version of the near-and-far correlation since the far-field channel model is an approximation of the near-field channel model, which is shown below.}
\end{remark}

\begin{figure*}[t]
	\centering
	\subfigure[System setting.]{\includegraphics[width=5cm]{./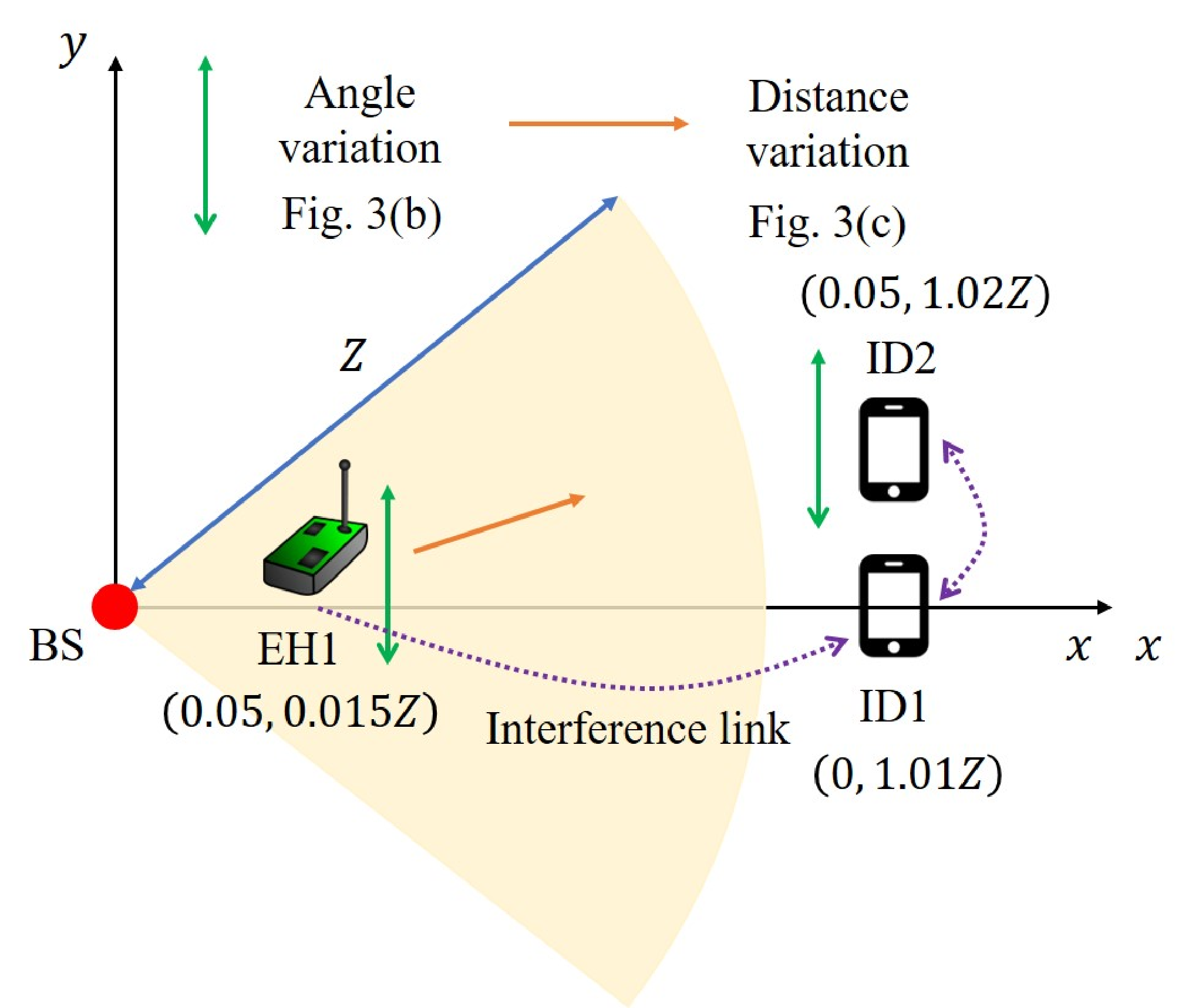}\label{fig:inter}}
	\subfigure[Interference vs. spatial angle.]{\includegraphics[width=4.9cm]{./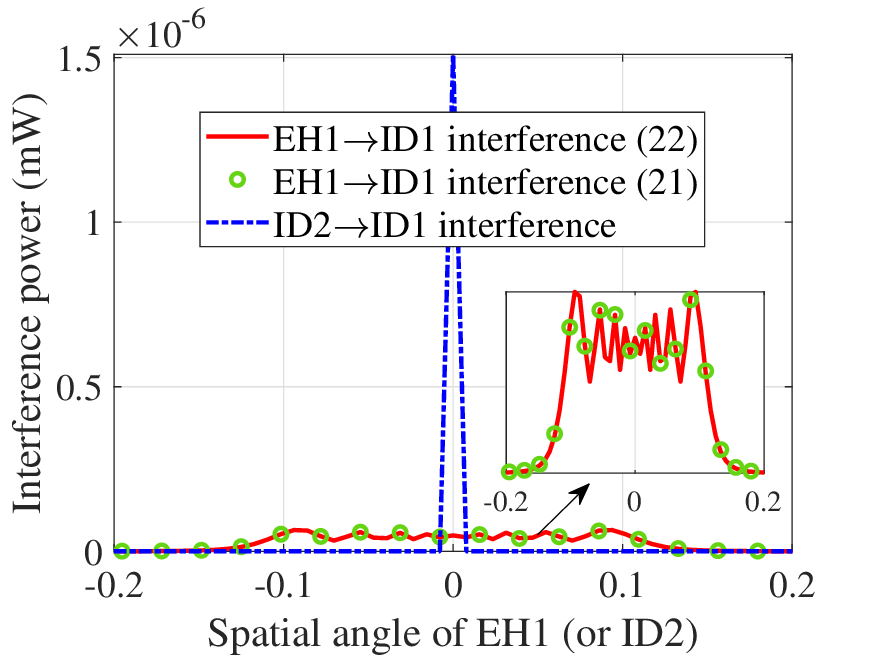}\label{fig:ID_angle}}
	\subfigure[Interference vs. distance.]{\includegraphics[width=4.9cm]{./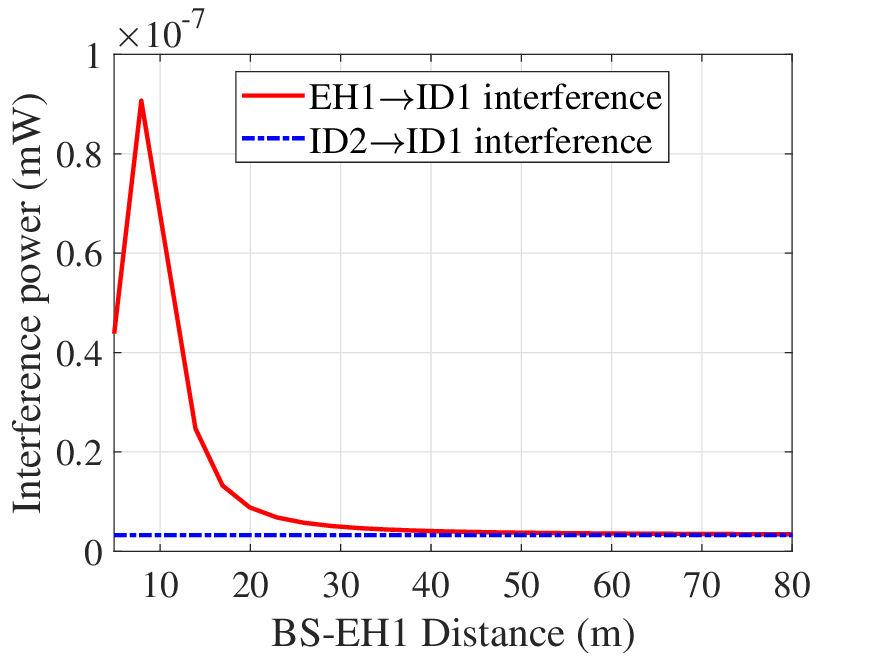}\label{fig:ID_dist}}
	\subfigure[System setting.]{\includegraphics[width=4.9cm]{./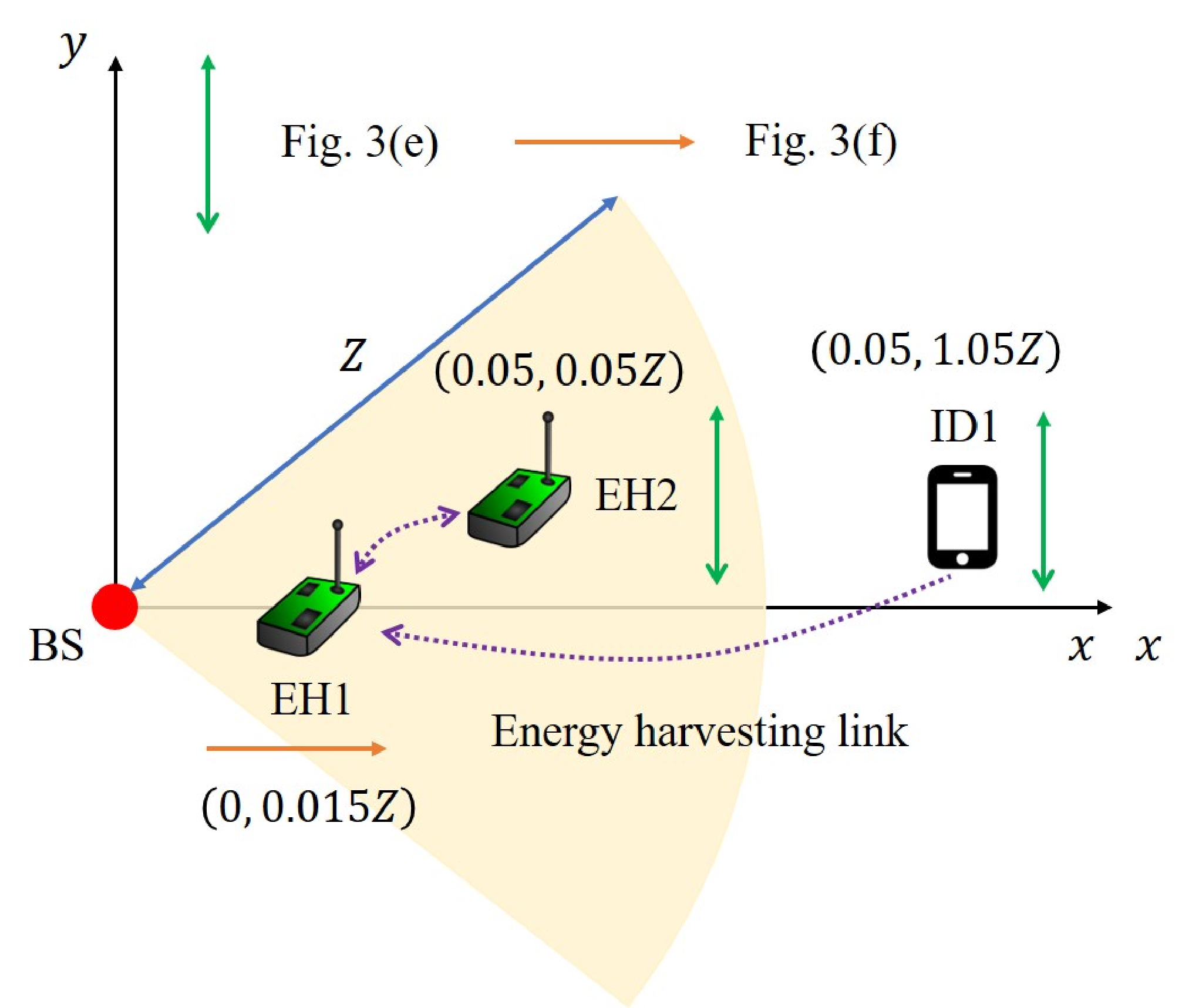}\label{fig:power}}
	\subfigure[Harvested power vs. spatial angle.]{\includegraphics[width=4.9cm]{./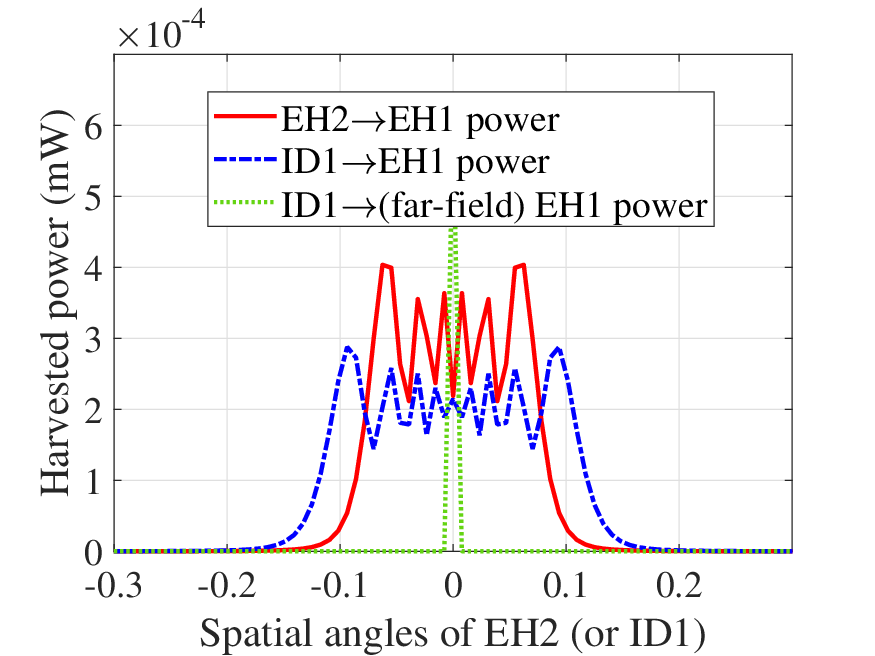}\label{fig:EH_angle}}	
	\subfigure[Harvested power vs. distance.]{\includegraphics[width=4.9cm]{./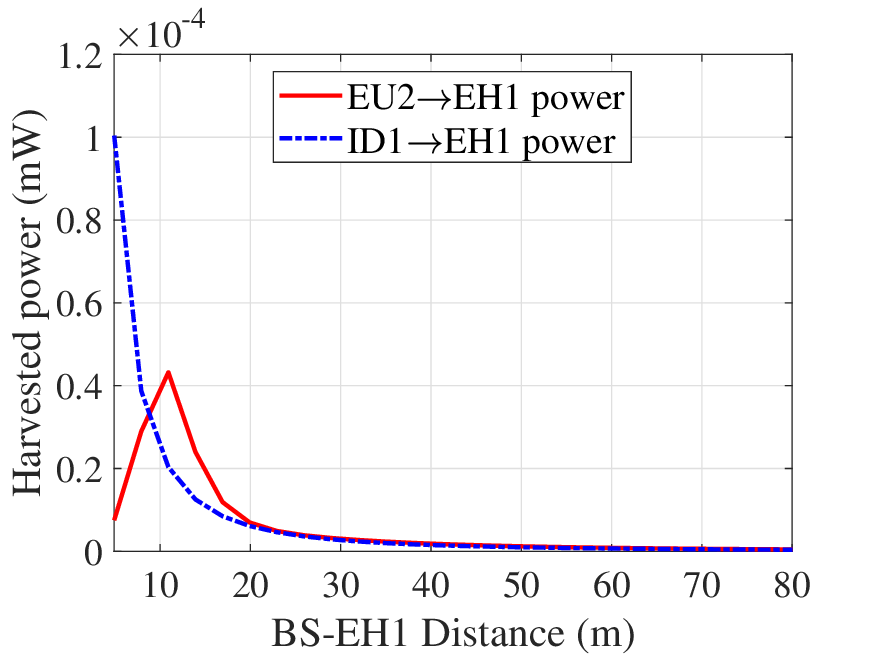}\label{fig:EH_dist}}
	\caption{Illustration of interference power at the far-field ID receivers and harvested power at the near-field receivers.}
\end{figure*}

\begin{lemma}\label{Th:correlation}
	\emph{The correlation between any two near-field steering vectors $	\eta(\theta_{p},\theta_{q},r_{p},r_{q})$ can be approximated as
		\begin{align}\label{NN_correlation}
			\eta(\theta_{p},\theta_{q},r_{p},r_{q})=|\mathbf{b}^{H}&(\theta_{p},r_{p})\mathbf{b}(\theta_{q},r_{q})|\nn\\
			&\approx\left|\frac{\hat{C}(\beta_1,\beta_2)+j\hat{S}(\beta_1,\beta_2)}{2\beta_2}\right|,
		\end{align}
		where 
		\begin{equation}\label{keypara}
			\beta_1=\frac{(\theta_q-\theta_p)}{\sqrt{d\left|\frac{1-\theta_p^2}{ r_p}-\frac{1-\theta_q^2}{  r_q}\right|}},~~~
			\beta_2=\frac{N}{2}\sqrt{d\left|\frac{1-\theta_p^2}{ r_p}-\frac{1-\theta_q^2}{  r_q}\right|}.
		\end{equation}
		Besides, $\hat{C}(\beta_1,\beta_2)={C}(\beta_1+\beta_2)-{C}(\beta_1-\beta_2)$ and $\hat{S}(\beta_1,\beta_2)={S}(\beta_1+\beta_2)-{S}(\beta_1-\beta_2)$, in which ${C}(\beta)=\int_{0}^{\beta}\cos(\frac{\pi}{2}t^2) {\rm d} t$ and ${S}(\beta)=\int_{0}^{\beta}\sin(\frac{\pi}{2}t^2) {\rm d} t$ are the Fresnel integrals.  
	}	
\end{lemma}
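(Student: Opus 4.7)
My plan is to reduce the inner product $\mathbf{b}^H(\theta_p,r_p)\mathbf{b}(\theta_q,r_q)$ to a single Gaussian–quadratic sum over the antenna index, replace the sum by a Riemann integral, complete the square, and then recognize the resulting integral as a difference of Fresnel integrals.

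First, I would substitute the second-order Taylor expansion $r^{(n)}_k\approx r_k-\delta_n d\theta_k+\frac{\delta_n^2 d^2(1-\theta_k^2)}{2r_k}$ already adopted for the steering vector in \eqref{near_steering} into the definition of $\eta$. Using $d=\lambda/2$, the phase difference at antenna $n$ collapses to $\pi\delta_n A+\pi\delta_n^2 B$ with $A\triangleq \theta_q-\theta_p$ and $B\triangleq \frac{d}{2}\!\left(\frac{1-\theta_p^2}{r_p}-\frac{1-\theta_q^2}{r_q}\right)$, so that
\begin{equation}
\mathbf{b}^H(\theta_p,r_p)\mathbf{b}(\theta_q,r_q)=\frac{1}{N}\sum_{n=0}^{N-1}e^{j\pi(\delta_n A+\delta_n^2 B)}.
\end{equation}
Since $\delta_n$ ranges symmetrically over $-\tfrac{N-1}{2},\ldots,\tfrac{N-1}{2}$ with unit spacing, and since the integrand is smooth on this scale for XL-array parameters, I would approximate the sum by $\int_{-N/2}^{N/2} e^{j\pi(xA+x^2 B)}\,dx$; this is the only nontrivial approximation, and I would remark that its error is $O(1/N)$ by a standard Euler–Maclaurin argument, so the conclusion holds for XL-arrays.

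Next, I would complete the square in the exponent, writing $xA+x^2B=B\bigl(x+\tfrac{A}{2B}\bigr)^2-\tfrac{A^2}{4B}$ (treating the sign of $B$ by passing to $|B|$ at the end, which is absorbed into the absolute value). Applying the change of variables $u=(x+\tfrac{A}{2B})\sqrt{2|B|}$ produces integrand $e^{j\pi u^2/2}$ and transforms the interval $[-N/2,N/2]$ into $[\beta_1-\beta_2,\beta_1+\beta_2]$, since by direct computation $\tfrac{N}{2}\sqrt{2|B|}=\beta_2$ and $\tfrac{A}{2B}\sqrt{2|B|}=A/\sqrt{2|B|}=\beta_1$ with the definitions in \eqref{keypara}. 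The Jacobian factor $\tfrac{1}{\sqrt{2|B|}}$ combines with the prefactor $\tfrac{1}{N}$ to give exactly $\tfrac{1}{2\beta_2}$.

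Splitting $e^{j\pi u^2/2}=\cos(\pi u^2/2)+j\sin(\pi u^2/2)$ yields, by the definitions of $C$ and $S$, the expression $\frac{e^{-j\pi A^2/(4B)}}{2\beta_2}\bigl[\hat{C}(\beta_1,\beta_2)+j\hat{S}(\beta_1,\beta_2)\bigr]$. Taking absolute values removes the unit-modulus phase prefactor and gives the claimed formula \eqref{NN_correlation}. The main technical obstacle is the sum-to-integral step — everything afterwards is algebra and the Fresnel substitution — so I would devote the bulk of the write-up to justifying that approximation and flagging its accuracy regime (large $N$, $r_p,r_q$ in the Fresnel region), while the quadratic-completion and Fresnel identification are essentially bookkeeping.
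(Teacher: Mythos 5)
Your derivation is correct and is precisely the argument the paper intends: the paper omits the proof and defers to [Lemma 1, \cite{chen2022hierarchical}], whose derivation proceeds exactly as you describe --- substitute the second-order (quadratic-phase) approximation into the inner product, replace the symmetric sum over $\delta_n$ by an integral, complete the square, and identify the Fresnel integrals, with $2|B|=d\left|\frac{1-\theta_p^2}{r_p}-\frac{1-\theta_q^2}{r_q}\right|$ being the identity that makes your $\beta_1,\beta_2$ coincide with \eqref{keypara}. The only point worth flagging is the sign of $B$: it flips the orientation of the $u$-interval and conjugates the integrand, but since $C$ and $S$ are odd and the result sits inside an absolute value, this is harmless, as you note.
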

\begin{proof}
The proof is similar to that in [Lemma 1,  \cite{chen2022hierarchical}] and hence is omitted for brevity.
\end{proof}

{\color{black}Lemma~\ref{Th:correlation} reveals the correlation of any two near-field steering vectors, which is fundamentally determined by two key parameters $\beta_1$ and $\beta_2$ in \eqref{keypara}. To be more specific, $\beta_1$ is a function of spatial angles (i.e., $\theta_{p}$ and $\theta_{q}$) and distances (i.e., $r_{p}$ and $r_{q}$), while $\beta_2$ depends on the number of BS antennas, as well as the spatial angles and distances. Note that the correlation has a symmetry property, i.e., $\eta(\theta_{p},\theta_{q},r_{p},r_{q})=\eta(\theta_{q},\theta_{p},r_{q},r_{p})$. Moreover, the channel correlation in \eqref{NN_correlation} is a general version of the near-and-far correlation obtained in \cite{zhang2023mixed} by setting $r_{p}=\infty$ or $r_{q}=\infty$, i.e., one near-field channel steering vector reduces to a far-field channel steering vector.}

Next, we present two examples to show interesting insights about the interference and harvested power in the considered mixed-field SWIPT system.
{\color{black}
	\begin{example}[Interference power at the far-field ID receivers]
		\emph{We present a concrete example in Fig.~\ref{fig:inter} to show the interference power at a far-field ID receiver, ID1, from signals for another ID receiver ID2 (located in the far-field) and an EH receiver EH1 (located in the near-field). Specifically, we consider a mixed-field SWIPT system with $N=256$, $f=30$ GHz and $P^{\rm EH}_{1}=P^{\rm ID}_{2}=30$ dBm, where EH1, ID1 and ID2 are respectively located at $(0.05,0.015Z)$, $(0,1.01Z)$ and $(0.05,1.02Z)$ under the polar coordinate ($\theta, r$). We plot in Figs. \ref{fig:ID_angle} and \ref{fig:ID_dist} the interference powers at ID1 versus the EH1 (or ID2) spatial angle and BS-EH1 distance, respectively. First, one can observe from Fig. \ref{fig:ID_angle} that the interference from ID2 is dominant only when ID1 and ID2 are located in the same spatial angle, while there is interference from EH1 within a relatively wide angular region (i.e., ($-0.16,0.16$)). Next, it is shown in Fig. \ref{fig:ID_dist} that when EH1 and ID1 are located in different spatial angles (i.e., $0.05$ vs. $0$), the interference from EH1 to ID1 generally decreases with the BS-EH1 distance, since EH1 gradually reduces to a far-field user when its distance with the BS is larger than $Z$ and there is small interference power when two far-field receivers locate in different spatial angles.}
	\end{example}
	\begin{example}[Harvested power at the near-field EH receivers]\label{Ex:PL}
	\emph{In Fig. \ref{fig:power}, we present an example to show the difference in the harvested power at an EH receiver, EH1, from signals for another EH receiver EH2 (located in the near-field) and an ID receiver ID1 (located in the far-field). In this case, the harvested powers at EH1 versus EH2 (or ID1) spatial angle and BS-EH1 distance are plotted in Figs. \ref{fig:EH_angle} and \ref{fig:EH_dist}, respectively. First, it is observed that the harvested power at EH1 comes from the signals for EH2 and ID1, both of which provide significant power within a certain angular region. This is quite different from the conventional far-field SWIPT case, in which ID1 signal can charge another (far-field) EH receiver only when they are located in similar spatial angles (see the green dotted line in Figs. \ref{fig:EH_angle}). Moreover, as shown in Fig. \ref{fig:EH_dist}, the harvested power at EH1 generally decreases with the BS-EH1 distance, and eventually becomes negligible when the BS-EH1 distance is sufficiently long (i.e., EH1 locates in the far-field region). This is because the harvested power from ID1 becomes smaller when both EH1 and ID1 are located in the far-field region with different spatial angles.}
\end{example}
}
\subsection{Problem Reformulation} 
{\color{black}Based on Lemma \ref{Th:correlation}, $\tilde{R}(\{\tilde{P}^{\rm EH}_{k}\},\{\tilde{P}^{\rm ID}_{m}\})$ in \eqref{NewR} and $\tilde{Q}(\{\tilde{P}^{\rm EH}_{k}\},\{\tilde{P}^{\rm ID}_{m}\})$ in \eqref{NewQ} can be rewritten as functions of the EH/ID correlation, given in \eqref{Eq:NewnewR} and \eqref{Eq:NewnewQ} on the bottom of the page.}
\begin{figure*}[b]
\hrulefill
\begin{align}\label{Eq:NewnewR}
\tilde{R}^{\rm ID}_{m}(\{\tilde{P}^{\rm EH}_{k}\},\{\tilde{P}^{\rm ID}_{m}\})=
\log_2\left(1+\frac{\tilde{P}^{\rm ID}_{m}g^{\rm ID}_{m}}{\sum^{K}_{k=1}\tilde{P}^{\rm EH}_{k}g^{\rm ID}_{m}\eta^2(\theta_{m},\theta_{k},r_{k})+\sum^{M}_{j=1,j	\neq m}\tilde{P}^{\rm ID}_{j}g^{\rm ID}_{m}\eta^2(\theta_{m},\theta_{j})+\sigma^2_{m}}\right),
\end{align}
\end{figure*}
\begin{figure*}[b]
	\hrulefill
\begin{align}\label{Eq:NewnewQ}
	\tilde{Q}(\{\tilde{P}^{\rm EH}_{k}\},\{\tilde{P}^{\rm ID}_{m}\})=
	\sum_{k=1}^{K}\!\alpha_{k}\zeta
	\left(\tilde{P}^{\rm EH}_{k}g^{\rm EH}_{k}+\sum_{i=1,i	\neq k}^{K}\tilde{P}^{\rm EH}_{i}g^{\rm EH}_{k}\eta^2(\theta_{k},\theta_{i},r_{k},r_{i})+\sum_{m=1}^{M}\tilde{P}^{\rm ID}_{m}g^{\rm EH}_{k}\eta^2(\theta_{m},\theta_{k},r_{k})\right).
\end{align}
\end{figure*}

To facilitate the analysis, we first define a correlation matrix as 
\begin{equation}
	\mathbf{\Lambda}=
	\begin{bmatrix}
		\eta^2_{1,1}&\eta^2_{1,2}&\cdots&\eta^2_{1,K+M}\\
		\eta^2_{2,1}&\eta^2_{2,2}&\cdots&\eta^2_{2,K+M}\\
		\vdots&&\ddots&\vdots\\
		\eta^2_{K+M,1}&\eta^2_{K+M,2}&\cdots&\eta^2_{K+M,K+M}\\
	\end{bmatrix},
\end{equation}
where $\eta^2_{p,q}$ denotes the correlation between the channel steering vectors of EH/ID receiver $p$ and EH/ID  receiver $q$, which is given by
\begin{equation}
	\eta^2_{p,q}=
	\begin{cases}
		|\mathbf{b}^{H}(\theta_{p},r_{p})\mathbf{b}(\theta_{q},r_{q})|^2 & \text{if}~1\le p,q \le K, \\
		|\mathbf{b}^{H}(\theta_{p},r_{p})\mathbf{a}(\theta_{q})|^2 & \text{if}~ 1\le p \le K,\\&\quad K+1\le q \le K+M,\\
			|\mathbf{a}^{H}(\theta_{p})\mathbf{a}(\theta_{q})|^2 & \text{if}~ K+1\le p,q\le K+M,\\
				1 &\text{if}~p=q.\\
	\end{cases}
\end{equation}

{\color{black} Next, to rewrite problem (P2) in a more compact form, we further slightly change the form of the correlation matrix $\mathbf{\Lambda}$ by setting some of the
 diagonal elements to be zero (i.e., $([\mathbf{\Lambda}]_{(K+1),(K+1)}=0,\cdots,[\mathbf{\Lambda}]_{(K+M),(K+M)}=0)$), since we do not consider energy harvesting at the ID receivers. As such, the new correlation matrix is given by}
\begin{equation}
	\mathbf{\bar{\Lambda}}=
	\begin{bmatrix}
		1&\eta^2_{1,2}&\cdots&\eta^2_{1,K+M}\\
		\eta^2_{2,1}&1&\cdots&\eta^2_{2,K+M}\\
		\vdots&&\ddots&\vdots\\
		\eta^2_{K+M,1}&\eta^2_{K+M,2}&\cdots&0\\
	\end{bmatrix}.
\end{equation}
Let $\mathbf{y}=\left[\tilde{P}^{\rm EH}_{1},\cdots,\tilde{P}^{\rm EH}_{k},\cdots,\tilde{P}^{\rm EH}_{K},\tilde{P}^{\rm ID}_{1},\cdots,\tilde{P}^{\rm ID}_{m},\cdots,\tilde{P}^{\rm ID}_{M}\right]^{T}$ denote the vector including all power allocation optimization variables. Then, problem (P2) can be equivalently recast as follows
\begin{subequations}
	\begin{align}
		({\bf P3}):~~\max_{\substack{\mathbf{y}} }  &~~~~	(\mathbf{c}^{\rm EH})^{T}\mathbf{\bar{\Lambda}}\mathbf{y}\nn\\
		\text{s.t.}
		&~~~~ \sum_{m=1}^{M}\log_{2}\left( 1+\frac{(	\mathbf{c}^{\rm ID}_{m})^{T}\mathbf{y}}{(	\mathbf{c}^{\rm ID}_{m})^{T}\mathbf{\bar{\Lambda}}\mathbf{y}+\sigma^2_{m}}\right) \ge R,\label{C:sum-rate_m}
		\\
		&~~~~
		\mathbf{1}^{T}_{(K+M)\times 1}\mathbf{y}\le P_{0},\label{C:sum-power1}\\
		&~~~~
		\mathbf{y} \succeq  \mathbf{0}\label{C:nonn2},
	\end{align}
\end{subequations}
where $\mathbf{c}^{\rm EH}\!\!=\!\!\left[g^{\rm EH}_{1},\cdots,g^{\rm EH}_{k},\cdots,g^{\rm EH}_{K}, \mathbf{0}_{1\times M}\right]^{T}$ and $
	\mathbf{c}^{\rm ID}_{m}\!\!=\!\!\left[\mathbf{0}_{1\times K}, \mathbf{0}_{1\times (m-1)},g^{\rm ID}_{m}, \mathbf{0}_{(m+1)\times M}\right]^{T} ,\forall m\in \mathcal{M}.$
	
{\color{black}Problem (P3) is feasible if there is at least one feasible solution satisfying all constraints, especially the sum-rate constraint in \eqref{C:sum-rate_m}. This issue can be solved by formulating a feasibility-check problem, as will shown in Section~\ref{feasibility-check}. Next, compared with problem (P2), 
	problem (P3) renders a more concise form for the considered joint beam scheduling and power allocation optimization, since the objective function and constraints in \eqref{C:sum-power1} and \eqref{C:nonn2} are all affine. However, the sum-rate constraint in \eqref{C:sum-rate_m} is not in a convex form due to the complicated ratio terms as well as the intrinsic coupling between the power allocation. This problem will be solved in the next section.}
\section{Proposed Algorithm for Solving (P3): General case and special cases}\label{SCandGC}
{In this section, we first propose an efficient algorithm to solve the problem for the general case. Then, we consider special cases of problem (P3) to shed interesting insights into the optimal joint beam scheduling and power allocation design.}

\subsection{General Case}\label{General}
First, consider the general case with multiple EH receivers and mulitple ID receivers. Since the sum-rate constraint in \eqref{C:sum-rate_m} is not in a convex form, 
the conventional optimization methods
cannot be directly applied. Thus, an efficient algorithm is proposed to obtain a subopitmal solution to problem (P3). Specifically, we first introduce slack variables $\{S_{m}\}$ and $\{I_{m}\}$ such that 
\begin{align}
	\frac{1}{S_{m}}&=(\mathbf{c}^{\rm ID}_{m})^{T}\mathbf{y},~~\forall m \in \mathcal{M},\\
	I_{m}&=(	\mathbf{c}^{\rm ID}_{m})^{T}\mathbf{\bar{\Lambda}}\mathbf{y}+\sigma^2_{m},~~\forall m \in \mathcal{M}.
\end{align}
Then, problem (P3) can be equivalently transformed into the following form
\begin{subequations}
	\begin{align}
		({\bf P4}):~~\max_{\substack{\mathbf{y},\{S_{m}\},\{I_{m}\}} }  &~~~~	(\mathbf{c}^{\rm EH})^{T}\mathbf{\bar{\Lambda}}\mathbf{y}\nn\\
		\text{s.t.}~~~
		&~~~~ \sum_{m=1}^{M}\log_{2}\left( 1+\frac{1}{S_{m}	I_{m}}\right) \ge R,\label{C:71}
		\\
		&~~~~ 	\frac{1}{S_{m}}\leq(\mathbf{c}^{\rm ID}_{m})^{T}\mathbf{y}, ~~~\forall m \in \mathcal{M},\label{C:xx2} \\
		&~~~~ 	I_{m}\geq(	\mathbf{c}^{\rm ID}_{m})^{T}\mathbf{\bar{\Lambda}}\mathbf{y}+\sigma^2_{m}, ~~~\forall m \in \mathcal{M}, \label{C:sum-rate_m1}\\
		&~~~~
		\eqref{C:sum-power1},\eqref{C:nonn2}.\nn
	\end{align}
\end{subequations}
{Note that the remaining challenge for solving problem (P4) is the constraint in \eqref{C:71}. To address this issue, we first present a useful lemma below.}
\begin{lemma}\label{Le:sca}
	\emph{Given $x>0$ and $y>0$, $f(x,y)=\log_{2}(1+\frac{1}{xy})$ is a convex function with respect to $x$ and $y$.}
\end{lemma}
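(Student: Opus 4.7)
The plan is to prove convexity of $f(x,y)=\log_2(1+\tfrac{1}{xy})$ on the open orthant $\{x>0,\,y>0\}$ by a direct Hessian computation. Since the factor $1/\ln 2$ is an irrelevant positive multiplicative constant, I would work with the rescaled function $\tilde f(x,y)=\ln(1+xy)-\ln x-\ln y$, whose second partials can be computed term by term in a few lines.

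First, I would compute the four second-order partial derivatives. A short calculation yields $\tilde f_{xx}=\frac{1}{x^{2}}-\frac{y^{2}}{(1+xy)^{2}}$, $\tilde f_{yy}=\frac{1}{y^{2}}-\frac{x^{2}}{(1+xy)^{2}}$, and $\tilde f_{xy}=\frac{1}{(1+xy)^{2}}$. Combining the two diagonal entries into single fractions gives $\tilde f_{xx}=\frac{2xy+1}{x^{2}(1+xy)^{2}}$ and symmetrically $\tilde f_{yy}=\frac{2xy+1}{y^{2}(1+xy)^{2}}$, both of which are strictly positive on the domain.

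Next, to verify that the Hessian is positive semidefinite, it is enough for a $2\times 2$ symmetric matrix with positive diagonal to check that the determinant is nonnegative. A direct computation gives
\begin{equation*}
\tilde f_{xx}\tilde f_{yy}-\tilde f_{xy}^{2}=\frac{(2xy+1)^{2}-x^{2}y^{2}}{x^{2}y^{2}(1+xy)^{4}}=\frac{3x^{2}y^{2}+4xy+1}{x^{2}y^{2}(1+xy)^{4}}>0
\end{equation*}
for every $x,y>0$. Hence the Hessian is positive definite throughout the domain, which yields the desired (strict) convexity of $f$.

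The main obstacle, such as it is, is simply keeping the algebra transparent so that the key polynomial identity $(2xy+1)^{2}-x^{2}y^{2}=3x^{2}y^{2}+4xy+1$ surfaces cleanly; once this is in hand, positive definiteness of the Hessian is immediate. An alternative route would try to express $f$ as a sum or composition of standard convex building blocks, but this is less direct because $\ln(1+xy)$ is itself neither convex nor concave jointly in $(x,y)$ on the positive orthant, so the decomposition $\ln(1+xy)-\ln x-\ln y$ does not inherit convexity term by term. The direct Hessian computation therefore seems to be the cleanest route.
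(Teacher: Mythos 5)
Your proof is correct, and it is worth noting that the paper itself does not actually prove this lemma: it simply states that ``the proof is similar to that of [Lemma 1, \cite{9139273}]'' and omits it. Your direct Hessian verification is therefore a complete, self-contained substitute for that citation. The computation checks out: writing $f = \frac{1}{\ln 2}\bigl(\ln(1+xy)-\ln x-\ln y\bigr)$, the second partials are indeed $\tilde f_{xx}=\frac{2xy+1}{x^{2}(1+xy)^{2}}$, $\tilde f_{yy}=\frac{2xy+1}{y^{2}(1+xy)^{2}}$, $\tilde f_{xy}=\frac{1}{(1+xy)^{2}}$, and the determinant simplifies to $\frac{3x^{2}y^{2}+4xy+1}{x^{2}y^{2}(1+xy)^{4}}>0$, so the Hessian is positive definite and $f$ is (strictly) convex on the open positive orthant. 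Your remark that the naive decomposition fails because $\ln(1+xy)$ is jointly neither convex nor concave is also accurate, and it explains why a direct second-order argument (or an equivalent perspective/composition trick) is needed rather than termwise reasoning. One stylistic point: the lemma only claims convexity, so you could state the conclusion as positive semidefiniteness of the Hessian and note the strictness as a bonus; as written, nothing is wrong.
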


\begin{proof}
	The proof is similar to that of [Lemma 1, \cite{9139273}] and hence is omitted for brevity.
\end{proof}
{Based on Lemma~\ref{Le:sca}, $\log_{2}(1+\frac{1}{S_{m}	I_{m}})$ in \eqref{C:71} is a joint convex function with respect to $S_{m}$ and $I_{m}$. This thus allows us to apply the SCA method to tackle the constraint \eqref{C:71} as below.}
\begin{lemma}
	\emph{By applying the first-order Taylor expansion, $\log_{2}(1+\frac{1}{S_{m}I_{m}})$ can be lower-bounded as 
		\begin{align}
			\log_{2}&\left( 1+\frac{1}{S_{m}	I_{m}}\right) \ge R^{\rm low}_{m}\triangleq\log_{2}\left(1+\frac{1}{\widetilde{S}_{m}\widetilde{I}_{m}}\right)\nn\\
			&-\frac{(S_{m}-\widetilde{S}_{m})(\log_{2}e)}{\widetilde{S}_{m}+\widetilde{S}_{m}^2\widetilde{I}_{m}}-\frac{(I_{m}-\widetilde{I}_{m})(\log_{2}e)}{\widetilde{I}_{m}+\widetilde{I}_{m}^2\widetilde{S}_{m}},
	\end{align}}
\end{lemma}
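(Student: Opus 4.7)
The plan is to invoke Lemma 3, which establishes that $f(S_{m},I_{m})=\log_{2}(1+\frac{1}{S_{m}I_{m}})$ is jointly convex in $(S_{m},I_{m})$ on the positive orthant. For any convex differentiable function, the first-order Taylor expansion at any interior point gives a global lower bound (the supporting hyperplane inequality applied to the graph of $f$), so
\begin{equation*}
f(S_m,I_m)\ge f(\widetilde{S}_m,\widetilde{I}_m)+\left.\frac{\partial f}{\partial S_m}\right|_{(\widetilde{S}_m,\widetilde{I}_m)}\!(S_m-\widetilde{S}_m)+\left.\frac{\partial f}{\partial I_m}\right|_{(\widetilde{S}_m,\widetilde{I}_m)}\!(I_m-\widetilde{I}_m).
\end{equation*}
The rest of the proof is therefore reduced to a routine evaluation of the two partial derivatives at the expansion point.

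Next I would compute
\begin{equation*}
\frac{\partial f}{\partial S_m}=\frac{1}{\ln 2}\cdot\frac{-1/(S_m^2 I_m)}{1+1/(S_mI_m)}=-\frac{\log_2 e}{S_m+S_m^2 I_m},
\end{equation*}
and by the symmetry of $f$ in its two arguments, $\frac{\partial f}{\partial I_m}=-\log_2 e/(I_m+I_m^2 S_m)$. Evaluating these at $(\widetilde{S}_m,\widetilde{I}_m)$ and substituting into the Taylor inequality above reproduces exactly the claimed bound $R_m^{\rm low}$. The minus signs in front of the linear correction terms are consistent with the fact that $f$ is monotonically decreasing in each argument.

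I do not expect any substantial obstacle here: the nontrivial analytic content (joint convexity of $f$) was already absorbed into Lemma 3, so what remains is a calculus exercise. Two mild points are worth noting for the SCA framework that consumes this lemma: (i) the expansion must be taken at a strictly positive point, which is ensured by the structure of problem (P4) since $\widetilde{S}_m$ is the reciprocal of a positive signal-power term in \eqref{C:xx2} and $\widetilde{I}_m$ is bounded below by $\sigma_m^2>0$ via \eqref{C:sum-rate_m1}; and (ii) the resulting surrogate $R_m^{\rm low}$ is affine (hence concave) in $(S_m,I_m)$, so that replacing the nonconvex constraint \eqref{C:71} by $\sum_{m=1}^{M} R_m^{\rm low}\ge R$ yields a convex inner approximation that is amenable to standard SCA iterations.
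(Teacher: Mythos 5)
Your proposal is correct and follows exactly the route the paper intends: it invokes the preceding convexity lemma to justify that the first-order Taylor expansion at $(\widetilde{S}_m,\widetilde{I}_m)$ is a global under-estimator, and the partial-derivative computation $\partial f/\partial S_m=-\log_2 e/(S_m+S_m^2 I_m)$ (and its symmetric counterpart) reproduces the stated coefficients. The additional remarks on strict positivity of the expansion point and on the affinity of $R_m^{\rm low}$ are accurate and consistent with how the paper uses the bound in (P5).
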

where $\{\widetilde{S}_{m},\widetilde{I}_{m}\}$ denote any given feasible points. As such, problem (P4) is approximated as the following problem
\begin{subequations}
	\begin{align}
		({\bf P5}):~~\max_{\substack{\mathbf{y},\{S_{m}\},\{I_{m}\}} }  &~~~~	(\mathbf{c}^{\rm EH})^{T}\mathbf{\bar{\Lambda}}\mathbf{y}\nn\\
		\text{s.t.}~~~
		&~~~~ \sum_{m=1}^{M}R^{\rm low}_{m}\ge R,
		\\
		&~~~~ \eqref{C:sum-power1},\eqref{C:nonn2},\eqref{C:xx2},\eqref{C:sum-rate_m1}.	\nn
	\end{align}
\end{subequations}
Problem (P5) now is a convex optimization problem, which thus can be
efficiently solved via standard solvers such as CVX. {Note that the solution obtained from problem (P5) requires an iterative process until the fractional increase of the objective function is below a threshold $\xi>0$, 
	based on which the beam scheduling and power allocation can be easily recovered. 
	{
\begin{remark}[Extension to nonlinear EH model]\label{Nonlinear}
\emph{For the general case under a nonlinear channel model, the beam scheduling and power allocation design can be extended as follows. First, consider the widely used nonlinear EH model proposed in \cite{boshkovska2015practical}. Then the harvested power  $\Phi_{k}$ at the EH receiver $k$ is expressed as
\begin{align}
	&\Phi_{k}=\frac{[\Psi_{k}-\kappa\Omega]}{1-\Omega},~~~~\Omega= \frac{1}{1+\exp(\varrho \varpi)},\label{Eq:1}\\
	&\Psi_{k}=\frac{\kappa}{1+\exp(-\varrho(Q_{k}-\varpi))},\label{Eq:2}
\end{align}
where $\Psi_{k}$ is a logistic function with its input being the received radio
frequency (RF) power $Q_{k}$ in \eqref{Eq:HP}, and the constant $\Omega$ ensures a zero-input/zero-output response. Besides, $\kappa$, $\varpi$, and $\varrho$ are constant parameters that depend on the employed EH circuit.
Then, the corresponding objective function can be obtained by replacing $Q_{k}$ in \eqref{Eq:Or_OJ} with $\Psi_{k}$ in \eqref{Eq:2}, which admits a sum-of-ratios form and thus is difficult to solve. To solve this problem, by invoking the techniques  
 in [Theorem 1, \cite{boshkovska2015practical}], the  non-convex objective function can be equivalently transformed into a subtractive form, based on which the proposed algorithm for the general case can be readily extended to obtain a suboptimal design.}
\end{remark}}

{To shed useful insights into the joint beam scheduling and power allocation design, we consider in the following three special cases and obtain their optimal solutions.}

\subsection{Special case I: ID Receivers Only}\label{feasibility-check}
First, we consider a special case where there are ID receivers only in the considered mixed-field SWIPT system.
As such, problem (P3) reduces to the following feasibility-check problem
\begin{subequations}
	\begin{align}
		({\bf P6}):~~\text{Find} &~~~~	\mathbf{y}^{\rm ID}\nn\\
		\text{s.t.}
		&~~~~\sum_{m=1}^{M}\log_{2}\left( 1+\frac{(	\mathbf{c}^{\rm ID}_{m})^{T}\mathbf{y}^{\rm ID}}{(	\mathbf{c}^{\rm ID}_{m})^{T}\mathbf{\bar{\Lambda}}\mathbf{y}^{\rm ID}+\sigma^2_{m}}\right) \ge R,\label{P41}\\
		&~~~~
		\mathbf{1}^{T}_{(K+M)\times 1}\mathbf{y}^{\rm ID}\le P_{0},\label{P42}\\
		&~~~~	\mathbf{y}^{\rm ID} \succeq  \mathbf{0},\label{P43}
	\end{align}
\end{subequations}
where $\mathbf{y}^{\rm ID}=\left[\mathbf{0}_{1\times K},\tilde{P}^{\rm ID}_{1},\cdots,\tilde{P}^{\rm ID}_{m},\cdots,\tilde{P}^{\rm ID}_{M}\right]^{T}.$
{The feasibility-check problem (P6) can be equivalently transformed into the following problem \cite{7332956}.
	{
\begin{lemma}\label{Le:water}
	\emph{Problem (P6) is feasible if $R^* \ge R$, where $R^*$ is the maximum value of the following problem (P7)
	\begin{subequations}
		\begin{align}
			({\bf P7}):~\max_{\substack{\mathbf{y}^{\rm ID}} }  &~~	\sum_{m=1}^{M}\log_{2}\left(1+\frac{(	\mathbf{c}^{\rm ID}_{m})^{T}\mathbf{y}^{\rm ID}}{(	\mathbf{c}^{\rm ID}_{m})^{T}\mathbf{\bar{\Lambda}}\mathbf{y}^{\rm ID}+\sigma^2_{m}}\right) \label{P5:OJ}\\
			\text{s.t.}
			&~~
			\eqref{P42},\eqref{P43}.\nn
		\end{align}
\end{subequations}}
\end{lemma}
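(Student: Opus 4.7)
The plan is to recognize that problem (P6) and problem (P7) differ only by the role played by the sum-rate term: in (P6) it appears as a hard constraint \eqref{P41}, while in (P7) the very same quantity is promoted to the objective. The feasible set of (P6) is therefore the intersection of the feasible set of (P7) with the additional inequality \eqref{P41}, so feasibility of (P6) is equivalent to the existence of at least one point in the (P7)-feasible polytope whose sum-rate reaches $R$. Maximizing the sum-rate over that polytope and comparing with $R$ is precisely what $R^{*}\ge R$ encodes, which is why an equivalence (and in particular the stated ``if'' implication) must hold.

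First I would formalize the ``if'' direction in one line. Let $\mathbf{y}^{\rm ID\star}$ be any optimizer of (P7). By construction it satisfies the power budget \eqref{P42} and the non-negativity \eqref{P43}, and the sum-rate attained at $\mathbf{y}^{\rm ID\star}$ equals $R^{*}$. Under the hypothesis $R^{*}\ge R$, constraint \eqref{P41} is therefore satisfied as well, so $\mathbf{y}^{\rm ID\star}$ is a certificate of feasibility of (P6). The converse direction, though not formally claimed, follows from the same observation: any feasible point of (P6) is also feasible for (P7) and yields an objective value of at least $R$, hence $R^{*}\ge R$.

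The hard part is not this bookkeeping but actually computing (or certifying a lower bound on) $R^{*}$, since (P7) is a non-convex sum-of-logs problem in which the transmit powers appear coupled inside the denominators of each SINR term. I would reuse the SCA framework developed for the general case in Section \ref{General}: introduce slack variables $\{S_m\}, \{I_m\}$ such that $1/S_m = (\mathbf{c}^{\rm ID}_{m})^{T}\mathbf{y}^{\rm ID}$ and $I_m \ge (\mathbf{c}^{\rm ID}_{m})^{T}\mathbf{\bar{\Lambda}}\mathbf{y}^{\rm ID} + \sigma^{2}_{m}$, invoke Lemma \ref{Le:sca} to exploit the joint convexity of $\log_{2}(1+1/(S_m I_m))$ in $(S_m, I_m)$, and replace each such term by its first-order Taylor lower bound so that every SCA iteration becomes a convex program solvable by CVX. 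Feasibility of (P6) can then be declared as soon as the iterated sum-rate crosses $R$; otherwise, convergence of the SCA iterates to a value strictly below $R$ provides (conservative) evidence of infeasibility, which is acceptable since SCA only guarantees a suboptimal lower bound on $R^{*}$.
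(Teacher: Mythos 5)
Your core argument is correct and is essentially the paper's own proof: any maximizer of (P7) automatically satisfies \eqref{P42} and \eqref{P43}, and under $R^*\ge R$ it also satisfies \eqref{P41}, so it certifies feasibility of (P6). The additional material on computing $R^*$ is outside the scope of the lemma (the paper subsequently solves (P7) by the quadratic-transform fractional-programming method rather than SCA), but it does not affect the validity of your proof.
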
}
\begin{proof}
	It can be shown that problem (P6) is feasible when constraints \eqref{P41}--\eqref{P43} are all satisfied. If the overall power budget is not fully utilized, then the sum-rate can be improved by allocating more power to the ID receivers, thus constraint \eqref{P42} holds with equality. Then, constraint \eqref{P41} (i.e., the objective function of problem (P7)) holds if the maximum value of problem (P7) is larger than $R$, thus completing
	the proof.
\end{proof}}
Problem (P$7$) is a non-convex optimization problem due to the complicated ratio terms in the objective function \eqref{P5:OJ}. To address this issue, the fractional programming (FP) method \cite{8314727} is invoked for transforming the objective function into a more tractable form.
Specifically, as problem (P7) is a concave-convex FP problem,  the quadratic transform \cite{8314727} can be employed to convert the concave-convex objective function into a sequence of concave subproblems, which can be easily solved. To this end, we introduce a set of auxiliary variables $\boldsymbol{\gamma}=[\gamma_{1},\cdots,\gamma_{M}]^T\in\mathbb{C}^{M\times1}$, based on which the objective function in \eqref{P5:OJ} is rewritten as
\begin{align}\label{qdobj}
\begin{aligned}
	f^{\rm obj}(\mathbf{y}^{\rm ID}, \boldsymbol{\gamma})=\sum_{m=1}^{M}\log&\bigg(1+2\gamma_{m}\sqrt{(\mathbf{c}^{\rm ID}_{m})^{T}\mathbf{y}^{\rm ID}}\\
	&-\gamma_{m}^2\left((\mathbf{c}^{\rm ID}_{m})^{T}\mathbf{\bar{\Lambda}}\mathbf{y}^{\rm ID}+\sigma^2_{m}\right)\bigg).
\end{aligned}
\end{align}
Then, we adopt an iterative algorithm to solve problem (P7) by alternately optimizing $\mathbf{y}^{\rm ID}$ and $\gamma_{m}$. Specifically, given any $\mathbf{y}^{\rm ID}$, the optimal $\gamma_{m}$ can be obtained by setting $\partial	f^{\rm obj}/\partial \gamma_{m}$ to be zero, leading to
\begin{equation}\label{Eq:gamma}
	\gamma^{*}_{m}=\frac{(	\mathbf{c}^{\rm ID}_{m})^{T}\mathbf{y}^{\rm ID}}{(	\mathbf{c}^{\rm ID}_{m})^{T}\mathbf{\bar{\Lambda}}\mathbf{y}^{\rm ID}+\sigma^2_{m}},~~~  m\in\mathcal{M}.
\end{equation}
{On the other hand, given any $\gamma_{m}$, finding the optimal $\mathbf{y}^{\rm ID}$ for \eqref{qdobj} is a convex problem, whose optimal solution is well-known to exhibit a \emph{water-filling} structure \cite{8995606}}.

\subsection{Special case II: EH Receivers Only}

Next, we consider the case where EH receivers exist in the system only, for which the optimization variables $\mathbf{y}$ can be expressed as 
$\mathbf{y}^{\rm EH}=\left[\tilde{P}^{\rm EH}_{1},\cdots,\tilde{P}^{\rm EH}_{k},\cdots,\tilde{P}^{\rm EH}_{K},\mathbf{0}_{1\times M}\right]^{T}.$
In this case, our objective is to maximize the weighted sum-power for the near-field EH receivers given the transmit power constraint. Accordingly, problem (P3) is reformulated as
\begin{subequations}
	\begin{align}
		({\bf P8}):~~\max_{\substack{\mathbf{y}^{\rm EH}} }  &~~~~	(\mathbf{c}^{\rm EH})^{T}\mathbf{\bar{\Lambda}}\mathbf{y}^{\rm EH}\label{Eq:OJJ}\\
		\text{s.t.}
		&~~~~
\mathbf{1}^{T}_{(K+M)\times 1}\mathbf{y}^{\rm EH}\le P_{0},\label{Eq:6rate}\\
&~~~~	\mathbf{y}^{\rm EH} \succeq  \mathbf{0}.\label{Eq:6non}
	\end{align}
\end{subequations}
Problem (P8) is now a linear programming (LP) problem. As such, we define an EH priority function and obtain the optimal beam scheduling and power allocation. 
{
\begin{definition}\label{Def2}
	\emph{Define $\rho_{k}\triangleq[(\mathbf{c}^{\rm EH})^{T}\mathbf{\bar{\Lambda}}]_k$ as the EH priority for each EH receiver $k$, in which the one with the highest EH priority is given by
	\begin{equation}
		\rho = \arg\max_{k\in\mathcal{K}}~\rho_{k}= \arg\max_{k\in\mathcal{K}} ~\left[(\mathbf{c}^{\rm EH})^{T}\mathbf{\bar{\Lambda}}\right]_k.
\end{equation}}
\end{definition}}
{
\noindent Note that the defined EH priority corresponds to the coefficient of the objective function in \eqref{Eq:OJJ} for each EH receiver $k$, which essentially represents the power-harvesting ability of EH receiver $k$ in the near-field WPT system.  Note that in the near-field WPT, each EH receiver can not only harvest power from the energy beam steered towards itself, but also the power leakage from other EH receivers.}

\begin{proposition}\label{Le:EUs}
	\emph{In the optimal solution to problem (P8), all transmit power should be allocated to the EH receiver with the highest EH priority. In other words, the optimal power allocation is 
	\begin{equation}\label{os_EH}
			\begin{cases}
			\tilde{P}^{\rm EH}_{\rho} =  P_{0} &  \\
			\tilde{P}^{\rm EH}_{k} = 0  & ~~\forall ~k\in \mathcal{K}/\{\rho\}.
		\end{cases}
	\end{equation}}
\end{proposition}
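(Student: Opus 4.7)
The plan is to exploit the linearity of problem (P8) and recognize it as a trivial linear program over the scaled simplex $\{\mathbf{y}\succeq\mathbf{0}:\mathbf{1}^{T}\mathbf{y}\le P_{0}\}$. First, I will expand the objective as
\begin{equation}
(\mathbf{c}^{\rm EH})^{T}\mathbf{\bar{\Lambda}}\mathbf{y}^{\rm EH}=\sum_{k=1}^{K}\rho_{k}\tilde{P}^{\rm EH}_{k},
\end{equation}
using the fact that by construction the ID-indexed entries of $\mathbf{y}^{\rm EH}$ are all zero, so only the first $K$ columns of $\mathbf{\bar{\Lambda}}$ contribute, and the $k$-th coefficient is exactly the EH priority $\rho_{k}=[(\mathbf{c}^{\rm EH})^{T}\mathbf{\bar{\Lambda}}]_{k}$ from Definition~\ref{Def2}. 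This reduces (P8) to maximizing a weighted sum of non-negative scalars under a single budget constraint.

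Next I would finish via a one-line exchange argument. Take any feasible $\{\tilde{P}^{\rm EH}_{k}\}$ and suppose $\tilde{P}^{\rm EH}_{k_{0}}>0$ for some $k_{0}\ne\rho$. Move this power to index $\rho$, i.e.\ set $\tilde{P}^{\rm EH}_{k_{0}}\leftarrow 0$ and $\tilde{P}^{\rm EH}_{\rho}\leftarrow \tilde{P}^{\rm EH}_{\rho}+\tilde{P}^{\rm EH}_{k_{0}}$. Feasibility in \eqref{Eq:6rate}--\eqref{Eq:6non} is preserved, and the objective changes by $(\rho_{\rho}-\rho_{k_{0}})\tilde{P}^{\rm EH}_{k_{0}}\ge 0$ by the definition of $\rho$. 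Iterating this swap over all $k_{0}\ne\rho$ produces an allocation concentrated on index $\rho$ whose objective value is at least as large. Finally, since every $\rho_{k}$ is strictly positive (the diagonal term $[\mathbf{\bar{\Lambda}}]_{k,k}=1$ contributes $g^{\rm EH}_{k}>0$), the budget must be active at optimum, giving $\tilde{P}^{\rm EH}_{\rho}=P_{0}$, which is exactly \eqref{os_EH}.

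There is essentially no hard part here: the proposition is a textbook consequence of vertex optimality for an LP whose feasible region is a scaled probability simplex together with the origin. The only point that deserves a brief sanity check is the translation from the matrix form $(\mathbf{c}^{\rm EH})^{T}\mathbf{\bar{\Lambda}}\mathbf{y}^{\rm EH}$ to the scalar weighted sum, which hinges on the structural zero-padding of $\mathbf{y}^{\rm EH}$ on the ID block; and, if uniqueness of $\rho$ fails, one would note that any maximizer of $\rho_{k}$ can serve as the concentration index, so the statement should be read up to ties.
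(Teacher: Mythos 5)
Your proof is correct, but it takes a genuinely different route from the paper. The paper proves Proposition~\ref{Le:EUs} in Appendix~\ref{App1} via Lagrangian duality: it writes down the KKT conditions of (P8), argues that the optimal dual variable for the power budget satisfies $\tau=\max_{k}\rho_{k}$, deduces from stationarity that $[\boldsymbol{\mu}]_{k}>0$ for all $k\neq\rho$, and then invokes complementary slackness to force $[\mathbf{y}^{\rm EH}]_{k}=0$ for $k\neq\rho$. You instead observe that (P8) is a linear program over a scaled simplex, expand the objective into the scalar form $\sum_{k}\rho_{k}\tilde{P}^{\rm EH}_{k}$, and run a one-line exchange argument, closing with the observation that $\rho_{k}>0$ forces the budget to be tight. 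Both arguments are sound and your reduction of the matrix objective to the weighted sum (using the zero ID block of $\mathbf{y}^{\rm EH}$) is exactly right; your version is more elementary, avoids duality entirely, and explicitly handles ties in $\arg\max_{k}\rho_{k}$, which the paper glosses over. What the paper's KKT machinery buys is reusability: the same Lagrangian template is applied almost verbatim in Appendix~\ref{App2} to the harder Proposition~\ref{Le:manyEUs}, where the added sum-rate constraint couples the variables and a pure exchange argument would be less immediate.
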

\begin{proof}
	See Appendix~\ref{App1}.
\end{proof}
{Proposition~\ref{Le:EUs} is intuitively expected since each coefficient in the objective function \eqref{Eq:OJJ} denotes the amount of harvested power if EH receiver $k$ is scheduled; thus, the total transmit power should be allocated to the one with the highest EH priority.}

To sum up, the above two special cases with either ID receivers or EH receivers show different principles for near-/far-field beam scheduling and power allocation. Specifically, the power allocation for far-field ID receivers shares the {water-filling structure}, while that for near-field  EH receivers is allocating all power to the one with the highest EH priority. However, for the new mixed-field SWIPT system with both EH and ID receivers, it is unknown how to schedule different users and properly allocate the BS transmit power.

\begin{figure*}[t]
	\centering
	\subfigure[System setting.]{\includegraphics[width=5cm]{./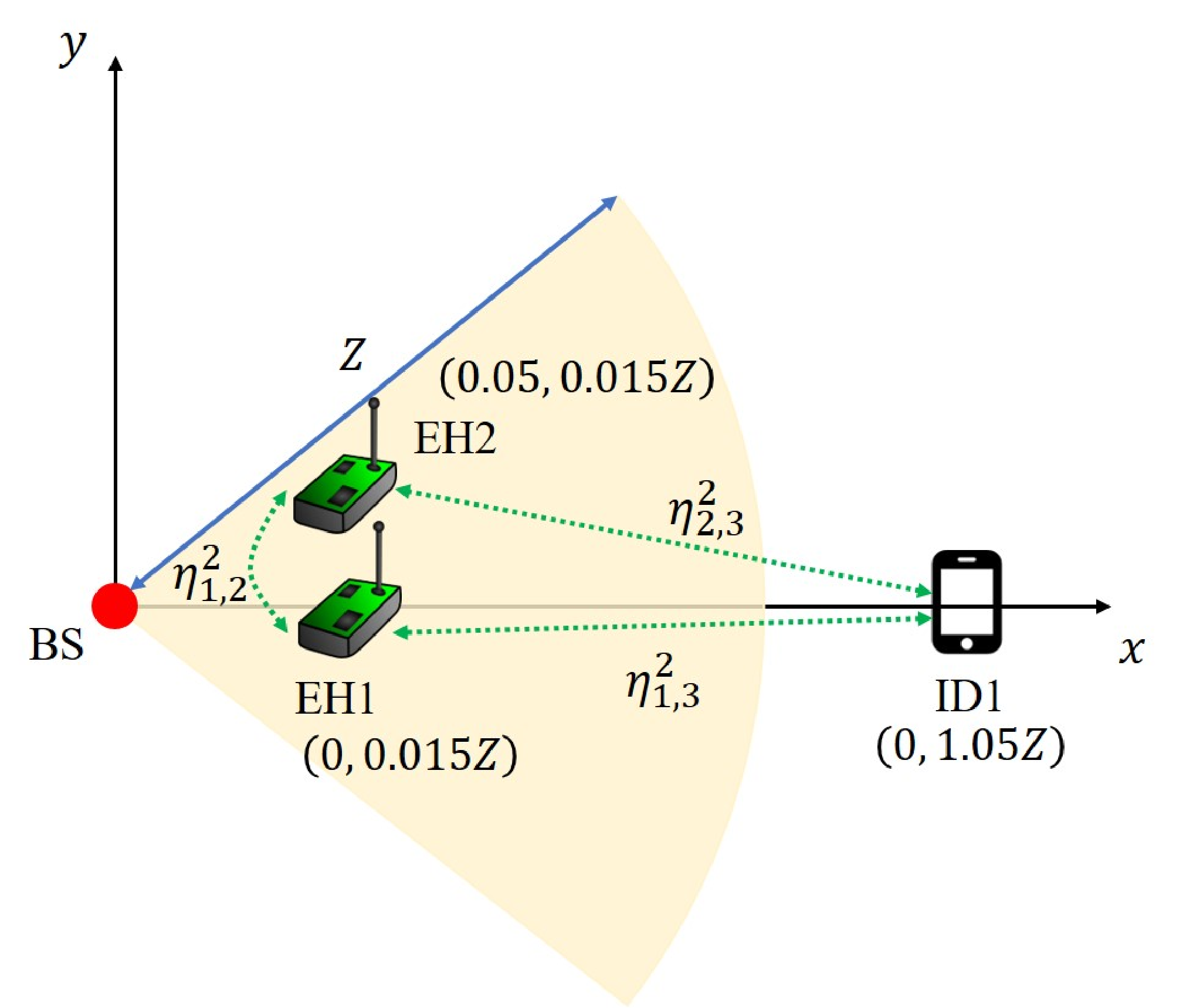}\label{fig:simul_corre}}
	\subfigure[Power allocation vs. correlation.]{\includegraphics[width=5cm]{./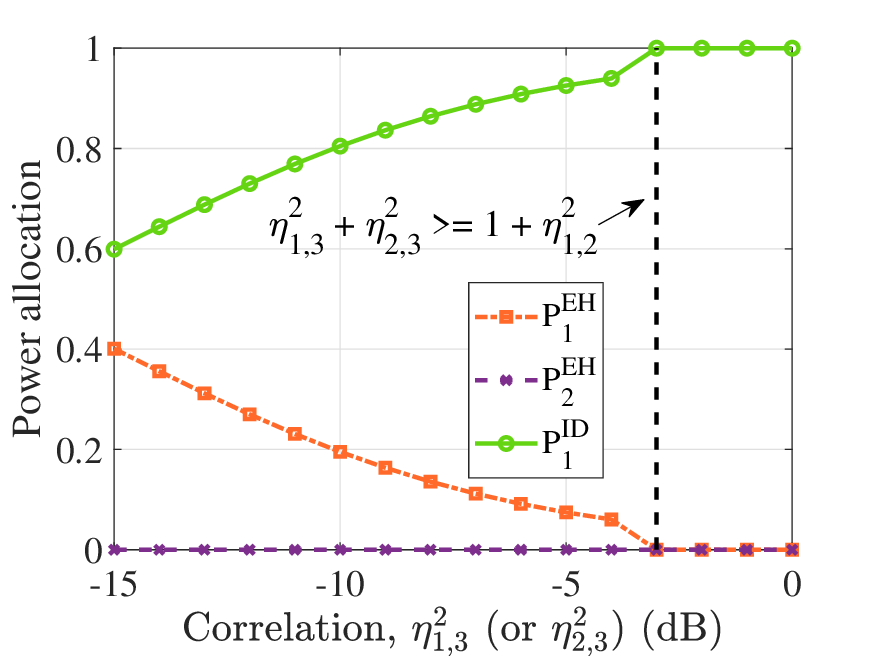}\label{fig:pa_corre}}
	\subfigure[Harvested power vs. correlation.]{\includegraphics[width=5cm]{./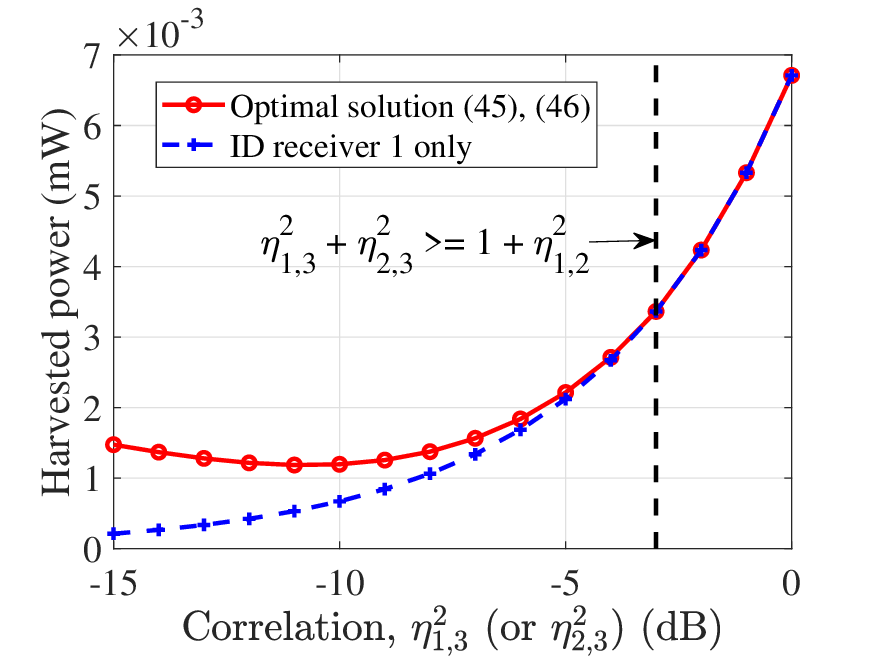}\label{fig:power_corre}}
	\caption{The effect of channel correlation on the optimized power allocation and harvested power with $N=256$, $f=30$ GHz, $P_{0}=30$ dBm, $R=5$, $K=2$ and $M=1$.}\label{fig:cccccc}
\end{figure*}

\subsection{Special case III: Multiple EH receivers and One ID receiver}\label{SC3}
 In order to answer the above question, we consider the following special case with $K$ EH receivers and one ID receiver only.  
 Accordingly, problem (P3) reduces to
 \begin{subequations}
 	\begin{align}
 		({\bf P9}):~\max_{\substack{\mathbf{\hat{y}}} }  &~~	(\mathbf{c}^{\rm EH})^{T}\mathbf{\bar{\Lambda}}\mathbf{\hat{y}}\label{Oj3}\\
 		\text{s.t.}
 		&~~~(2^R-1)((	\mathbf{c}^{\rm ID})^{T}\mathbf{\bar{\Lambda}}\mathbf{\hat{y}}+\sigma^2_{1})\le(\mathbf{c}^{\rm ID})^{T}\mathbf{\hat{y}},\label{C:sum-rate_m3}
 		\\
 		&~~
 		\mathbf{1}^{T}_{(K+1)\times 1}\mathbf{\hat{y}}\le  P_{0},\label{C:sum-power3}\\
 		&~~
 		\mathbf{\hat{y}} \succeq  \mathbf{0}\label{C:nonn3},
 	\end{align}
 \end{subequations}
 where $	\mathbf{\hat{y}}=\left[\tilde{P}^{\rm EH}_{1},\cdots,\tilde{P}^{\rm EH}_{k},\cdots,\tilde{P}^{\rm EH}_{K},\tilde{P}^{\rm ID}_{1}\right]^{T}$ and  $\mathbf{c}^{\rm ID}=\left[\mathbf{0}_{1\times K},g^{\rm ID}_{1}\right]^{T}$.

Problem (P9) is an LP problem, whose optimal solution is given below.
	{
\begin{proposition}\label{Le:manyEUs}
	\emph{The optimal solution to problem (P9) is 
		\begin{itemize}
			\item If $\rho  \neq K+1$, then
				\begin{equation}\label{ONEEU}
				\begin{cases}
					[\mathbf{\hat{y}}]_{\rho}=\frac{ P_{0}-\frac{(2^R-1)\sigma^2_{1}}{[(\mathbf{c}^{\rm ID})]_{K+1}}}{(2^R-1)[\mathbf{\bar{\Lambda}}]_{\rho,K+1}+1},\\
					[\mathbf{\hat{y}}]_{k}=0, ~~~ \forall ~k\in \mathcal{K}/\{\rho\},\\
					[\mathbf{\hat{y}}]_{K+1}=\frac{ P_{0}[\mathbf{\bar{\Lambda}}]_{\rho,K+1}(2^R-1)-\frac{(2^R-1)\sigma^2_{1}}{[(\mathbf{c}^{\rm ID})]_{K+1}}}{(2^R-1)[\mathbf{\bar{\Lambda}}]_{\rho,K+1}+1}.\\
				\end{cases}
				\end{equation}
			\item If $\rho = K+1$, then
			\begin{equation}\label{ALLIU}
				\begin{cases}
					[\mathbf{\hat{y}}]_{k}=0, ~~~\forall ~k\in \mathcal{K},\\
					[\mathbf{\hat{y}}]_{K+1}= P_{0}.\\
				\end{cases}
			\end{equation}
		\end{itemize}
}
\end{proposition}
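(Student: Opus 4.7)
The plan is to exploit the fact that (P9) is a linear program with two non-trivial inequality constraints (a rate bound and a sum-power bound), classify its vertices, and pin down the optimum by tightness arguments. First I would show that the sum-power constraint \eqref{C:sum-power3} is active at any optimum: if it were slack by $\Delta>0$, incrementing $\tilde{P}^{\rm ID}_{1}$ by $\Delta$ would only relax the rate constraint \eqref{C:sum-rate_m3} and would change the objective by $\rho_{K+1}\Delta\ge 0$, where $\rho_{K+1}=[(\mathbf{c}^{\rm EH})^{T}\bar{\mathbf{\Lambda}}]_{K+1}$. Hence \eqref{C:sum-power3} may be taken tight without loss of optimality.

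Using this to eliminate $\tilde{P}^{\rm ID}_{1}=P_0-\sum_{k=1}^{K}\tilde{P}^{\rm EH}_k$ rewrites the objective \eqref{Oj3} as
\begin{equation*}
\rho_{K+1}\,P_0+\sum_{k=1}^{K}(\rho_k-\rho_{K+1})\,\tilde{P}^{\rm EH}_k,
\end{equation*}
and reduces \eqref{C:sum-rate_m3} to a single inequality $\sum_{k=1}^{K}A_k\,\tilde{P}^{\rm EH}_k\le B$ with $A_k=(2^R-1)[\bar{\mathbf{\Lambda}}]_{K+1,k}+1>0$ and $B=P_0-(2^R-1)\sigma_1^2/[\mathbf{c}^{\rm ID}]_{K+1}\ge 0$; non-negativity of $B$ is precisely the feasibility condition. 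In the case $\rho=K+1$, every coefficient $\rho_k-\rho_{K+1}$ is non-positive, so combined with $\tilde{P}^{\rm EH}_k\ge 0$ the reduced objective is maximized by $\tilde{P}^{\rm EH}_k=0$ for all $k$, giving $\tilde{P}^{\rm ID}_{1}=P_0$ as in \eqref{ALLIU}.

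For the case $\rho\ne K+1$, at least one coefficient $\rho_k-\rho_{K+1}$ is strictly positive, and the reduced LP on $\mathbb{R}^{K}_{\ge 0}$ with a single non-trivial linear constraint attains its optimum at a vertex supported on one coordinate. The rate constraint must also be tight there: otherwise one could shift a small mass from $\tilde{P}^{\rm ID}_{1}$ to the scheduled EH receiver, which preserves feasibility and strictly increases the objective since the corresponding coefficient $\rho_k-\rho_{K+1}$ is positive. Identifying this scheduled receiver via Definition~\ref{Def2} as $\rho$ and solving the $2\times 2$ linear system in $(\tilde{P}^{\rm EH}_{\rho},\tilde{P}^{\rm ID}_{1})$ obtained from tightness of \eqref{C:sum-power3} and \eqref{C:sum-rate_m3} then yields \eqref{ONEEU}. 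I expect the main obstacle to be justifying that the scheduled index is exactly $\rho=\arg\max_k\rho_k$, rather than an index that trades EH priority against the coupling $[\bar{\mathbf{\Lambda}}]_{K+1,k}$ to the ID receiver; this calls for an explicit exchange argument along the lines of Proposition~\ref{Le:EUs}.
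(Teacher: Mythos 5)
Your route---force the power constraint tight, eliminate $\tilde{P}^{\rm ID}_{1}$, and enumerate the vertices of the resulting single-constraint LP---is genuinely different from the paper's, which forms the Lagrangian of (P9) and argues through the KKT conditions that both \eqref{C:sum-rate_m3} and \eqref{C:sum-power3} are active and that only index $\rho$ carries positive power. Your reduction is correct as far as it goes: the activeness of \eqref{C:sum-power3}, the reduced objective $\rho_{K+1}P_0+\sum_{k}(\rho_k-\rho_{K+1})\tilde{P}^{\rm EH}_k$, the constraint $\sum_k A_k\tilde{P}^{\rm EH}_k\le B$ with $B\ge 0$ as the feasibility condition, and the whole case $\rho=K+1$ all check out (and your Case-2 argument is cleaner than the paper's, since it does not need the rate constraint to be tight there). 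One small arithmetic point: solving your $2\times 2$ system gives $[\mathbf{\hat{y}}]_{K+1}$ with a \emph{plus} sign in front of the $(2^R-1)\sigma^2_1/[\mathbf{c}^{\rm ID}]_{K+1}$ term, not the minus sign printed in \eqref{ONEEU}; the printed formula violates $[\mathbf{\hat{y}}]_\rho+[\mathbf{\hat{y}}]_{K+1}=P_0$ and is evidently a typo, so "yields \eqref{ONEEU}" should be read modulo that sign.

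The obstacle you flag at the end is, however, a genuine gap and not one an exchange argument can close. Your own reduction shows the optimal vertex is $\arg\max_{k}(\rho_k-\rho_{K+1})/A_k$ with $A_k=(2^R-1)[\mathbf{\bar{\Lambda}}]_{K+1,k}+1$, and this need not equal $\arg\max_k\rho_k$: if $2^R-1$ is large, the top-priority EH receiver is strongly correlated with the ID receiver (large $[\mathbf{\bar{\Lambda}}]_{K+1,\rho}$, hence large $A_\rho$), and a slightly lower-priority receiver is nearly orthogonal to it, the latter wins the ratio test. The two selection rules genuinely differ, so the identification of the scheduled index with $\rho$ requires an additional hypothesis (e.g., equal or negligible cross-correlations $[\mathbf{\bar{\Lambda}}]_{K+1,k}$ across $k$). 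Notably, the paper's proof does not supply the missing step either: writing out its condition K1 for $k\le K$ gives $[\boldsymbol{\mu}]_k=(\rho_{K+1}-\rho_k)+\nu g^{\rm ID}_1 A_k$, and imposing $[\boldsymbol{\mu}]_j=0$ for the scheduled $j$ together with $[\boldsymbol{\mu}]_k\ge 0$ for the rest reproduces exactly your ratio criterion; the assertion that $[\boldsymbol{\mu}]_\rho=0$ and $[\boldsymbol{\mu}]_k>0$ otherwise "agrees with K7" is only valid when the maximizers of $\rho_k$ and of $(\rho_k-\rho_{K+1})/A_k$ coincide. So the step you could not fill in is precisely the step the paper also leaves unjustified.
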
}
\begin{proof}
See Appendix~\ref{App2}.
\end{proof}

	{Proposition \ref{Le:manyEUs} shows an interesting result that the optimal power allocation is determined by the defined EH priority function. Specifically, if the receiver with the highest EH priority is an EH receiver, simply called the best EH receiver (i.e., $\rho \neq K+1$), the optimal power allocation is allocating a part of power to the ID receiver for satisfying the sum-rate constraint, while the remaining power should be allocated to the best EH receiver. On the other hand, if the ID receiver has the highest EH priority, 
	 all power should be allocated to the ID receiver. However, the latter case (i.e., $\arg\max_{k\in\mathcal{K}\cap\{K+1\}} [(\mathbf{c}^{\rm EH})^{T}\mathbf{\bar{\Lambda}}]_k=K+1$) happens only when the correlation coefficients (i.e., $\eta_{K+m,k}$ or $\eta_{k,K+m}$ in $\mathbf{\bar{\Lambda}}$) among them are sufficiently large, such that the ID receiver has the highest EH priority. 
	 For better illustration, we plot respectively in Figs. \ref{fig:pa_corre} and \ref{fig:power_corre} the power allocation and harvested power versus the channel correlation $\eta^2_{1,3}$ (or $\eta^2_{2,3}$). One can observe that, as the channel correlation increases, the harvested-power gap between scheduling the ID receiver only and the optimal solution in \eqref{ONEEU} and \eqref{ALLIU} tends to decrease, and eventually reaches to zero when $\eta^2_{1,3}+\eta^2_{2,3}\ge 1+\eta^2_{1,2}$ (i.e., the ID receiver has the maximum coefficient). In this regard, as shown in Fig. \ref{fig:pa_corre}, allocating all power to the ID receiver is optimal, which is consistent with Proposition \ref{Le:manyEUs}.}

\begin{remark}[How to schedule EH and ID receivers in the mixed-field channels?]\label{Schedulinginmixedfield}
	\emph{Proposition~\ref{Le:manyEUs} shows that in most cases, the optimal policy for the mixed-field SWIPT system with only one ID receiver is allocating a portion of power to the ID receiver for satisfying the rate constraint, while the remaining power is allocated to the EH receiver with the highest EH priority. Note that this result is different from the conventional far-field SWIPT system, for which the optimal policy is allocating all power to the ID receivers \cite{6860253,8941080}, as illustrated in Fig. \ref{fig:howschedule}. The reasons are as follows. First, unlike the far-field beamforming that steers the beam along a specific direction, near-field beamforming focuses most of its energy on a specific region (or range), termed as \emph{energy focusing} in the existing literature \cite{9738442,9957130}. Thus, the EH efficiency in the near-field region is significantly improved, which encourages to allocate more power to the EH receivers for maximizing their harvested sum-power. Second, the energy focusing effect also indicates that less EH power is leaked to the ID receiver. Combining this effect with the severe path-loss between the BS and ID receiver, it is expected that the far-field ID receiver will receive less interference from the EH signals, as compared to conventional far-field SWIPT system.
		On the other hand, the transmit power to the ID receiver can charge the near-field EH receivers when they are located near the ID receiver in the spatial domain (as discussed in Remark \ref{Fresnel}).
		}
\end{remark}

\begin{figure}[t]
	\centering
	\includegraphics[width= 0.49\textwidth]{./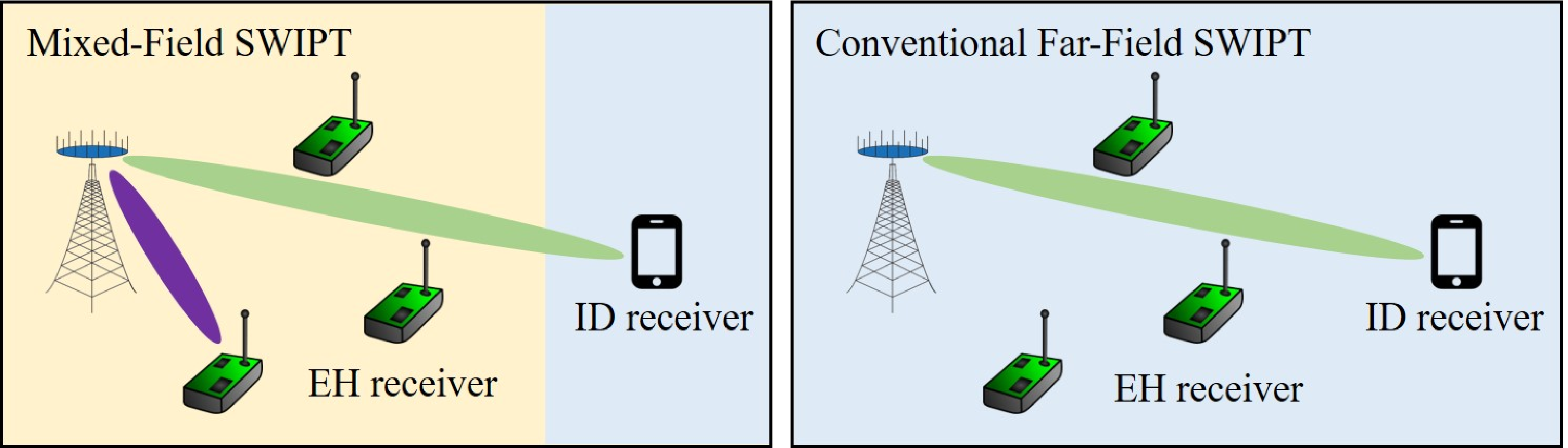}
	\caption{Illustration of the scheduling in mixed-field channels and conventional far-field channel, respectively.}\label{fig:howschedule}
\end{figure}
%

{
\begin{remark}[Convergence and complexity analysis]
	\emph{First, consider the convergence of the proposed SCA-based algorithm. As problem (P5) is solved optimally in each iteration and its objective value serves as a lower bound on that of problem (P4), it can be shown that the objective value of problem (P4) is non-decreasing. Moreover, since the system sum-power is upper-bounded by a finite value, the proposed algorithm is guaranteed to converge. Next, the overall complexity of the proposed SCA-based algorithm is determined by the number of iterations, denoted by $I_{\rm iter}$, and the number of the optimization variables (i.e., $K+M$). For each iteration, the computational complexity for solving problem (P5) by the interior-point method can be characterized as $\mathcal{O}((K+M)^{3.5})$ \cite{ding2023joint}. As a result, the total computational complexity of the proposed SCA-based algorithm is $\mathcal{O}(I_{\rm iter}(K+M)^{3.5})$. }
\end{remark}}

\section{Numerical Results}\label{SE:NR}

\begin{figure}[t]
	\centering
	\includegraphics[width=2.7in]{./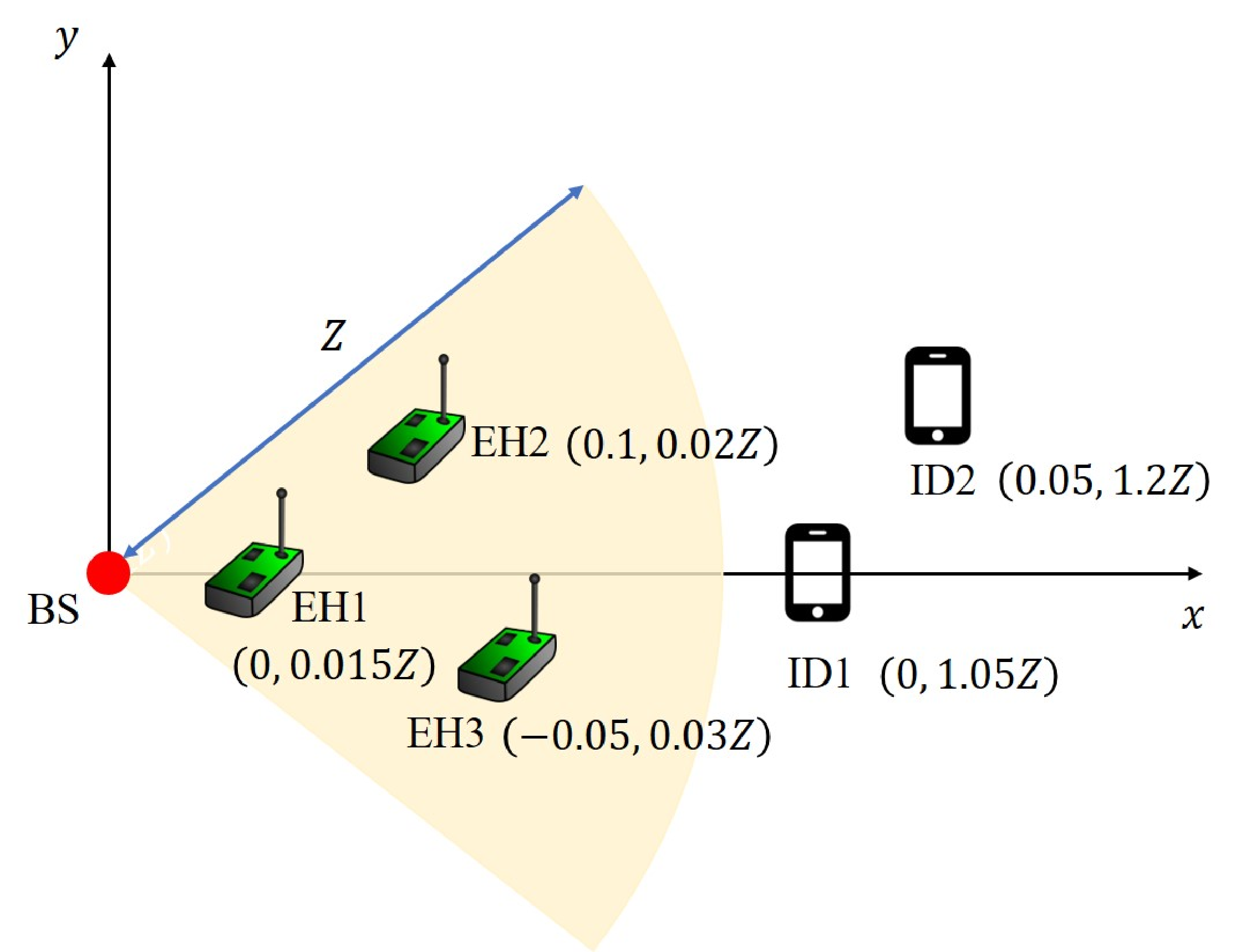}
	\caption{ Simulation setup for a mixed-field SWIPT system. }
	\label{fig:simlu}
\end{figure}

In this section, we present numerical results to show the effectiveness of the proposed scheme.
The considered system model is illustrated in Fig.~\ref{fig:simlu}, where one BS with $N=256$ antennas serves three EH receivers and two ID receivers. Specifically, under the polar coordinate, the EH receivers are located at $(0,0.015Z)$, $(0.1,0.02Z)$ and $(-0.05,0.03Z)$, while the two ID receivers are located  at $(0,1.05Z)$ and $(0.05,1.2Z)$. Unless specified otherwise, the system parameters adopted in our numerical results are listed in Table~\ref{Table1}.
\begin{table}[t]
	\centering
	\caption{Simulation Parameters}
	\label{Table1}
		\begin{tabular}{|c|c|}
			\hline
			{\bf{Parameter}} & {\bf{Value}}  \\
			\hline
			Number of BS antennas & $N=256$\\\hline
				Number of EH receivers & $K=3$\\\hline
					Number of ID receivers & $M=2$\\\hline
			 Carrier frequency & $f=30$ GHz \\\hline
			Reference path-loss & $\beta=(\lambda/4\pi)^2=-62$ dB \\ \hline
			Maximum transmit power  & $P_0 =30$ dBm\\ \hline
			Noise power & $\sigma_{m}^2=-80$ dBm, $m\in\mathcal{M}$\\ \hline
			Energy harvest efficiency & $\zeta=50\% $ \cite{6860253}\\
			\hline	
			Antenna spacing & $d=\frac{\lambda}{2}=0.005$ m \\ \hline
			Power weight & $\alpha_{k}=1, \forall k \in \mathcal{K}$ \cite{8941080,9110849}\\ 
			\hline	
			Convergence threshold & $\xi=0.001$ \\ \hline
			Number of NLoS paths & $L_{k}=L_{m}=2, \forall k,m$\\ 
			\hline
		\end{tabular}
	\end{table}
\begin{figure}[t]
	\centering
	\includegraphics[width=3in]{./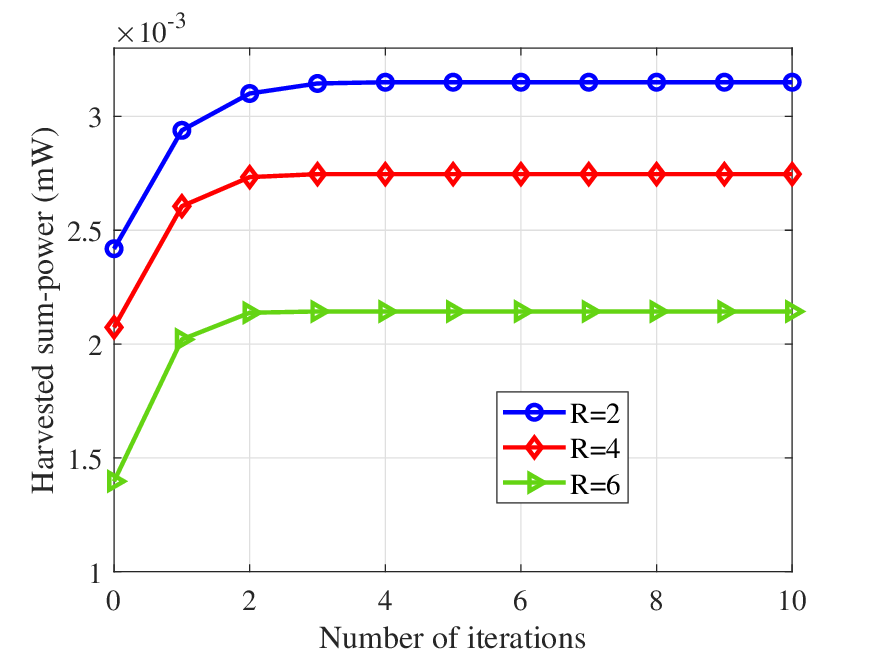}
	\caption{Convergence of the proposed scheme for different sum-rate requirements.}
	\label{fig:conver}
\end{figure}

Moreover, for performance comparison, we consider the following four benchmark schemes.
\begin{itemize}
	\item 	\emph{Exhaustive-search scheme:} This scheme enumerates  all possible beam scheduling combinations with optimized power allocation, and then selects the best one that achieves the maximum harvested sum-power.
	\item 	{\emph{Far-field SWIPT scheme:} Similar to the conventional far-field SWIPT beamforming and scheduling \cite{6860253}, this scheme focuses on the optimization of beam scheduling and power allocation for ID receivers only, while assuming no energy beams steered towards the EH receivers.}
	\item	\emph{Greedy scheduling + optimized power allocation (GS+OPA) scheme}, for which the EH receiver with the highest EH priority and the ID receiver with the best channel condition are scheduled with optimized power allocation.
	\item	\emph{Optimal scheduling + equal power allocation (OS+EPA) scheme}, which adopts the optimal beam scheduling and allocates equal power allocation for all selected receivers.
	\item	\emph{All scheduling + equal power allocation (AS+EPA) scheme}, for which all EH and ID receivers are scheduled with equal power allocation.
\end{itemize}

\subsection{Convergence of SCA-based Scheme}
To show the convergence of the proposed SCA-based scheme, we plot in Fig.~\ref{fig:conver} the harvested sum-power versus the number of iterations for different sum-rate requirements, $R$. It is observed that the proposed SCA-based scheme always converges after a few iterations (e.g., $5$ iterations) under different sum-rate requirements.
\begin{figure*}[t]
	\centering
	\subfigure[Effect of BS transmit power with $K=3$, $M=2$, $R=5$.]{\includegraphics[width=3.1in]{./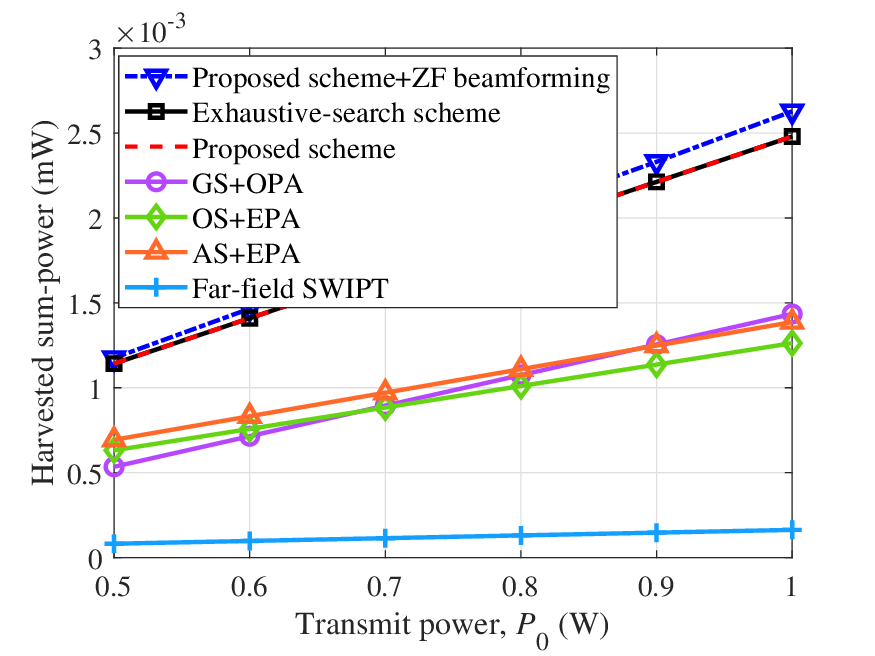}\label{fig:transpow}}
	\subfigure[Effect of sum-rate requirement with $K=3$, $M=2$.]{\includegraphics[width=3.1in]{./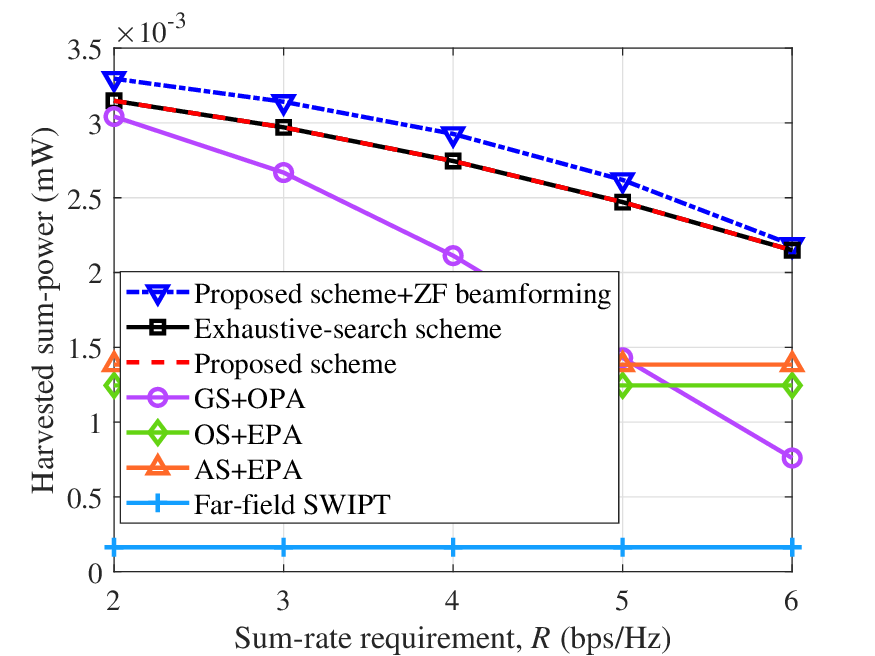}\label{fig:ratecons}}
	\subfigure[Effect of number of EH receivers with $M=2$, $R=5$.]{\includegraphics[width=3.1in]{./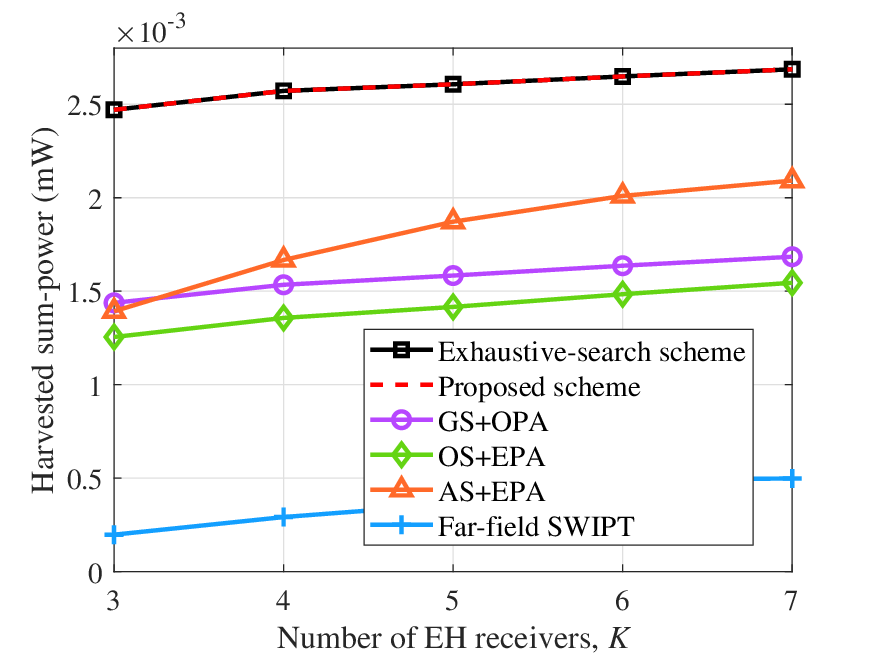}\label{fig:numEUs}}	
	\subfigure[Effect of number of ID receivers with $K=3$, $R=5$.]{\includegraphics[width=3.1in]{./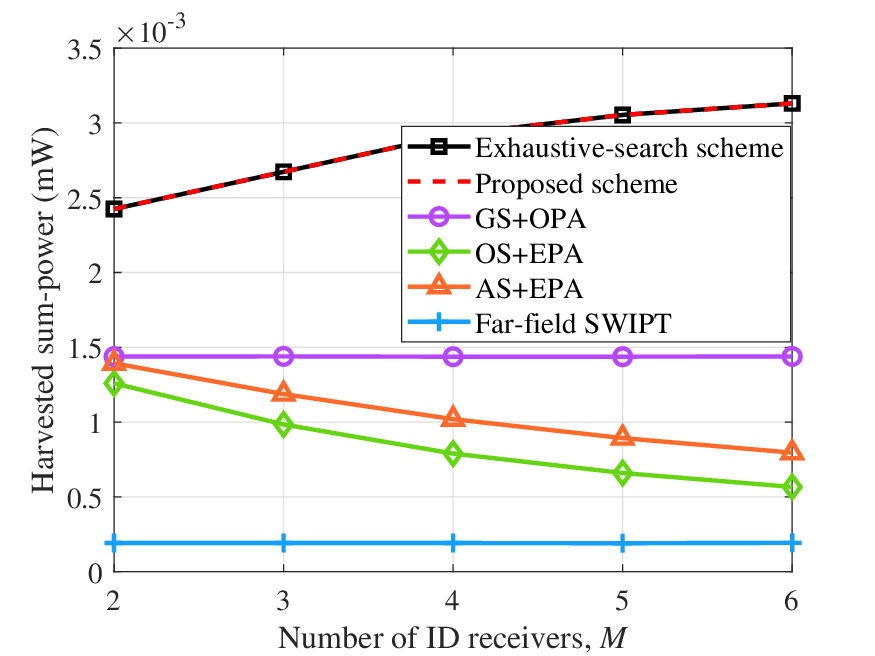}\label{fig:numIUs}}
	\caption{Harvested sum-power versus system parameters.}
\end{figure*}
\subsection{Effect of Maximum Transmit Power}\label{DBBB}
To show the effect of the maximum transmit power $P_0$, we plot in Fig. \ref{fig:transpow} the harvested sum-powers by different schemes versus $P_0$. First, it is observed that the harvested sum-powers by all schemes monotonically increases with the transmit power. {Second, the proposed scheme achieves close performance with that of the exhaustive-search scheme, while it only suffers a small performance loss compared to the proposed scheme combined with the ZF digital beamforming, especially when $P_0$ is small.} Besides, the proposed scheme achieves significant performance gain as compared to other benchmark schemes e.g., around twice of the harvested sum-power by the OS+EPA scheme, which demonstrates the importance of the beam scheduling and power allocation design. Last, one can observe that the harvested sum-power of the proposed scheme increases much faster with the BS transmit power than those of the benchmark schemes, {\color{black}thanks to the effectiveness of the designed beam scheduling and power allocation.}

\subsection{Effect of Sum-Rate Requirement}
In Fig.~\ref{fig:ratecons}, we compare the harvested sum-powers by different schemes versus the sum-rate requirement, $R$. First, for the proposed scheme and the GS+OPA scheme, it is observed that the harvested sum-power monotonically decreases as the sum-rate requirement increases. This is because when the sum-rate requirement increases, more power should be allocated to the ID receivers for satisfying the sum-rate requirement, hence leaving less power for the EH receivers. One interesting observation is that the gap between the proposed scheme and the GS+OPA scheme becomes larger with the growing sum-rate requirement. {This is because for the proposed scheme, the power allocated to different ID receivers follows the ``water-filling" structure, while the GS+OPA  scheme allocates transmit power to one ID receiver only, thus resulting in a lower EH efficiency.} Moreover, one can observe that the AS+EPA scheme is superior to the OS+EPA scheme. The reason is that the overall transmit power allocated to the EH receivers in the AS+EPA scheme is higher than that of the OS+EPA scheme, thus leading to a higher harvested sum-power performance. {Besides, it is observed that the far-field SWIPT scheme keeps unchanged with the sum-rate requirement. This is because in the mixed-field SWIPT, the far-field SWIPT scheme still admits a same solution structure with that in \cite{6860253}, for which only one ID receiver is scheduled.  }
\subsection{Effect of Number of EH Receivers}
In Fig. \ref{fig:numEUs}, we evaluate the performance of the proposed scheme under different number of EH receivers, $K$. Apart from the three EH receivers in Fig.~\ref{fig:simlu}, the newly added EH receivers are uniformly distributed in an area with a radius range 
$\left[ 0.015Z,0.3Z\right] $ and a spatial angle range $\left[ - \frac{\pi}{3},\frac{\pi}{3}\right]$. We plot in Fig. \ref{fig:numEUs} the harvested sum-power by different schemes with  $M=2$, $R=5$. First, it is observed that the proposed scheme significantly outperforms the benchmark schemes under different $K$. Second, we observe that the harvested sum-power by all schemes increase with $K$. {This is because the scheduled EH and ID beams can charge the newly added EH receiver in the considered SWIPT system, as shown in Example \ref{Ex:PL}. Besides, it is worthy noting that for the OS+EPA scheme, more power is allocated to the EH receivers when there are more EH receivers, leading to the highest increase rate of the harvested sum-power.}

\subsection{Effect of Number of ID Receivers}
Next, we show the effect of the number of ID receivers on the harvested sum-power by different schemes in Fig. \ref{fig:numIUs}. Apart from the original two ID receivers, the newly added ID receivers are uniformly distributed in an area with a radius range 
$\left[ 1.05Z,1.3Z\right] $ and a spatial angle range $\left[ - \frac{\pi}{3},\frac{\pi}{3}\right] $. First, it is observed that the harvested sum-power with the proposed scheme monotonically increases with $M$. This is intuitively expected since when $M$ increases, the scheduled ID beams steered towards the newly added ID receivers can also charge the selected EH receiver. Besides, the harvested sum-power by both OS+EPA and AS+EPA schemes appear to decrease with $M$. This is because less transmit power is allocated to the scheduled EH receiver with a larger $M$, thus reducing the total harvested sum-power. {In addition, one interesting observation is that when $M$ increases, the harvested sum-powers of the GS+OPA and far-field SWIPT schemes keep unchanged. This is because the newly added ID receivers with worse channel conditions do not affect the two schemes.}
{
\subsection{Complexity Comparison}
Last, we compare the performance and complexity of all schemes. Among others, the exhaustive-search scheme yields the best performance, while it incurs a prohibitively high complexity scaling in an order of $\mathcal{O}(2^{K+M}I_{\rm iter}(K+M)^{3.5})$. It is worth noting that the proposed scheme achieves very close weighted sum-power performance to the exhaustive-search scheme, but with a much lower computational complexity, i.e., $\mathcal{O}(I_{\rm iter}(K+M)^{3.5})$. The GS+OPA, OS+EPA, and AS+EPA schemes enjoy low complexity, but at the cost of considerable performance loss. Besides, the far-field SWIPT scheme achieves the worst performance, while its computation complexity is comparable to that of our proposed scheme, i.e.  $\mathcal{O}(I_{\rm iter}M^{3.5})$.
}

\section{Conclusions}\label{CL}
In this paper, we considered a new mixed-field SWIPT system comprising both near-field EH receivers and far-field ID receivers. An optimization problem was formulated to jointly optimize the BS beam scheduling and power allocation to maximize the weighted sum-power harvested at all EH receivers subject to the ID sum-rate and BS transmit power constraints. To gain useful insights, we showed that when there are one ID receiver and multiple EH receivers, the optimal design is allocating a portion of power to the ID receiver for satisfying the sum-rate constraint, while the remaining power is allocated to one EH receiver with the highest EH priority. {Moreover, for the general case, we proposed an efficient algorithm to obtain a suboptimal solution by utilizing the binary variable elimination and SCA methods.} Numerical results are presented to corroborate the effectiveness of the proposed scheme against various benchmark schemes. {In the future, it is interesting to study hardware-efficient hybrid beamforming design and new mixed-field multi-access techniques to further improve the XL-array system performance.}

\begin{appendices}
	\section{}\label{App1}
	First, the Lagrange function of problem (P8)
	is given by
	\begin{align}
		\mathcal{L}(\mathbf{y}^{\rm EH},\tau,\boldsymbol{\mu})=-(\mathbf{c}^{\rm EH})^{T}\mathbf{\bar{\Lambda}}\mathbf{y}^{\rm EH}&+\tau(\mathbf{1}^{T}_{(K+M)\times 1}\mathbf{y}^{\rm EH}- P_0)\nn\\
		&-\sum_{k=1}^{K}[\boldsymbol{\mu}]_k[\mathbf{y}^{\rm EH}]_k,
	\end{align}
	where $\tau$ and $\boldsymbol{\mu}=[\mu_1,\mu_2,\cdots,\mu_{K}]^T$ denote the corresponding dual variables for constraints \eqref{Eq:6rate} and \eqref{Eq:6non}, respectively.
	Then, the  Karush-Kuhn-Tucker (KKT) conditions for problem (P8) are given by
	\begin{align}
			&\text{K1}:	\nabla_{\mathbf{y}^{\rm EH}} \mathcal{L}(\mathbf{y}^{\rm EH},\tau,\boldsymbol{\mu})
			=\mathbf{0},\nn\\
			&\text{K2}:	\tau(\mathbf{1}^{T}_{(K+M)\times 1}\mathbf{y}^{\rm EH}- P_0)=0,\nn\\
			&\text{K3}:	\tau\ge0,~~\text{K4}: \boldsymbol{\mu}\succeq \mathbf{0},
			~~\text{K5}:\boldsymbol{\mu}^T\mathbf{y}^{\rm EH}=\mathbf{0},\nn
	\end{align}
	By checking conditions K1 and K3, one can readily obtain that $\tau>0$, and thus we have $\mathbf{1}^{T}_{(K+M)\times 1}\mathbf{y}^{\rm EH}- P_0=0$. Next, by unveiling the structure of K1, we can easily show that the optimal $\tau$ always satisfies $	\tau = \max_{k\in\mathcal{K}} ~\left[(\mathbf{c}^{\rm EH})^{T}\mathbf{\bar{\Lambda}}\right]_k$, i.e., the highest EH priority in Definition \ref{Def2}. Combining this with K1 and K4, we can show that if $[\boldsymbol{\mu}]_\rho=0$, $[\boldsymbol{\mu}]_k > 0, \forall k \in \mathcal{K}/\{\rho\}$, both K1 and K4 hold; while on other hand, if there exists  $[\boldsymbol{\mu}]_k = 0$, K1 does not hold.
%
Then, by checking K5, the optimal solution to problem (P8) is given in \eqref{os_EH}.
Combining the above leads to the proof of
	Proposition~\ref{Le:EUs}. 
	\section{}\label{App2}
		Similar to the proof of Proposition~\ref{Le:EUs}, the Lagrange function of problem (P9)
	is given by
	\begin{align}
		\mathcal{L}(\mathbf{\hat{y}},\tau,\nu,\boldsymbol{\mu})=&-(\mathbf{c}^{\rm EH})^{T}\mathbf{\bar{\Lambda}}\mathbf{\hat{y}}+\tau(\mathbf{1}^{T}_{(K+1)\times 1}\mathbf{\hat{y}}- P_0)\nn\\&+\nu[(2^R-1)((	\mathbf{c}^{\rm ID})^{T}\mathbf{\bar{\Lambda}}\mathbf{\hat{y}}+\sigma^2_{1})-(\mathbf{c}^{\rm ID})^{T}\mathbf{\hat{y}}]\nn\\
		&-\sum_{i=1}^{K+1}[\boldsymbol{\mu}]_i[\mathbf{\hat{y}}]_i,
	\end{align}
	where $\tau$, $\nu$, and $\boldsymbol{\mu}=[\mu_1,\mu_2,\cdots,\mu_{K+1}]^T$ denote corresponding dual variables for constraints \eqref{C:sum-rate_m3}--\eqref{C:nonn3}, respectively.
	Thus, the KKT conditions for problem (P9) are given by
	\begin{align}
			&\text{K1}:	\nabla_{\mathbf{\hat{y}}} \mathcal{L}(\mathbf{\hat{y}},\tau,\nu,\boldsymbol{\mu})
			=\mathbf{0},\nn\\
			&\text{K2}:	\nu[(2^R-1)((	\mathbf{c}^{\rm ID})^{T}\mathbf{\bar{\Lambda}}\mathbf{\hat{y}}+\sigma^2_{1})-(\mathbf{c}^{\rm ID})^{T}\mathbf{\hat{y}}]=0,\nn\\
			&\text{K3}:(2^R-1)((	\mathbf{c}^{\rm ID})^{T}\mathbf{\bar{\Lambda}}\mathbf{\hat{y}}+\sigma^2_{1})-(\mathbf{c}^{\rm ID})^{T}\mathbf{\hat{y}}\le 0, ~~\nn\\
			&\text{K4}:	\tau(\mathbf{1}^{T}_{(K+1)\times 1}\mathbf{\hat{y}}- P_0)=0,~~\text{K5}:\mathbf{1}^{T}_{(K+1)\times 1}\mathbf{\hat{y}}- P_0\le 0, \nn\\
			&\text{K6}:\boldsymbol{\mu}^T\mathbf{\hat{y}}=\mathbf{0}, ~~
			\text{K7}:	\tau\ge0,\nu \ge0,\boldsymbol{\mu}\succeq \mathbf{0},\nn	
	\end{align}
	By checking condition K3, one can readily obtain that $[\mathbf{\hat{y}}]_{K+1} $ must be greater than $0$, and then we have $[\boldsymbol{\mu}]_{K+1}=0$ according to K6. Then, for the dual variable $[\boldsymbol{\mu}]_{K+1}=0$, by using condition K1, it can be easily shown that neither $\tau=0$ nor $\nu=0$. Therefore, based on conditions K2 and K4, we can have the following result
	\begin{align}
		&(2^R-1)((	\mathbf{c}^{\rm ID})^{T}\mathbf{\bar{\Lambda}}\mathbf{\hat{y}}+\sigma^2_{1})-(\mathbf{c}^{\rm ID})^{T}\mathbf{\hat{y}}= 0,\label{KKT1}\\
		&\mathbf{1}^{T}_{(K+1)\times 1}\mathbf{\hat{y}}- P_{0}= 0.\label{KKT2}
	\end{align}
	Next, similar to Definition \ref{Def2}, among multiple EH receivers and one ID receiver, the one with the highest EH priority $\rho$  is given by
		\begin{equation}
		\rho = \arg\max_{k\in\mathcal{K}\cap\{K+1\}} ~\left[(\mathbf{c}^{\rm EH})^{T}\mathbf{\bar{\Lambda}}\right]_k,
	\end{equation}	
where two cases are considered below.
	\begin{itemize}
		\item {\bf (Case 1: ${{\rho} \neq K+1}$)} In this case, one of the EH receives has the highest EH priority. By using condition K1, we have $[\boldsymbol{\mu}]_\rho =0 $ and $ [\boldsymbol{\mu}]_k > 0 , \forall k \neq \rho \neq K+1$, which agrees with the condition K7. Then, according to K6, the following equation holds
		\begin{equation}\label{KKT3}
			[\mathbf{\hat{y}}]_k = 0, ~~~\forall k \neq \rho  \neq K+1.
		\end{equation}
		Combining \eqref{KKT1}, \eqref{KKT2} and \eqref{KKT3} leads to the following optimal solution to problem (P9)
		\begin{equation}
			\begin{cases}
				[\mathbf{\hat{y}}]_{\rho }=\frac{P-\frac{(2^R-1)\sigma^2_{1}}{[(\mathbf{c}^{\rm ID})]_{K+1}}}{(2^R-1)[\mathbf{\bar{\Lambda}}]_{\rho,K+1}+1},\\
				[\mathbf{\hat{y}}]_{k}=0, ~~~\forall k \neq \rho  \neq K+1,\\
				[\mathbf{\hat{y}}]_{K+1}=\frac{P[\mathbf{\bar{\Lambda}}]_{\rho,K+1}(2^R-1)-\frac{(2^R-1)\sigma^2_{1}}{[(\mathbf{c}^{\rm ID})]_{K+1}}}{(2^R-1)[\mathbf{\bar{\Lambda}}]_{\rho,K+1}+1}.\\
			\end{cases}
		\end{equation}
		\item {\bf (Case 2: ${{\rho} = K+1}$)} In this case, the ID receiver has the highest EH priority. Then, for any two dual variables $[\boldsymbol{\mu}]_i$ and $ [\boldsymbol{\mu}]_j$, $i \neq j \neq K+1$, by checking K1, we have $[\boldsymbol{\mu}]_i>0$ and $ [\boldsymbol{\mu}]_j>0$. Thus, condition K6 holds if the following conditions are satisfied
		\begin{equation}\label{KKT4}
			[\mathbf{\hat{y}}]_k = 0, ~~~\forall k  \neq K+1.
		\end{equation}
		In this regard, combining \eqref{KKT1}, \eqref{KKT2} and \eqref{KKT4} leads to the following optimal solution to problem (P9)
		\begin{equation}
			\begin{cases}
				[\mathbf{\hat{y}}]_{k}=0, ~~~\forall k \neq K+1,\\
				[\mathbf{\hat{y}}]_{K+1}=P_{0}.\\
			\end{cases}
		\end{equation}
	\end{itemize}
	Combining the above leads to the result in
	Proposition~\ref{Le:manyEUs}. 
\end{appendices}
\bibliographystyle{IEEEtran}
\bibliography{IEEEabrv,BS_Ref}

\begin{thebibliography}{10}
\providecommand{\url}[1]{#1}
\csname url@samestyle\endcsname
\providecommand{\newblock}{\relax}
\providecommand{\bibinfo}[2]{#2}
\providecommand{\BIBentrySTDinterwordspacing}{\spaceskip=0pt\relax}
\providecommand{\BIBentryALTinterwordstretchfactor}{4}
\providecommand{\BIBentryALTinterwordspacing}{\spaceskip=\fontdimen2\font plus
\BIBentryALTinterwordstretchfactor\fontdimen3\font minus
  \fontdimen4\font\relax}
\providecommand{\BIBforeignlanguage}[2]{{%
\expandafter\ifx\csname l@#1\endcsname\relax
\typeout{** WARNING: IEEEtran.bst: No hyphenation pattern has been}%
\typeout{** loaded for the language `#1'. Using the pattern for}%
\typeout{** the default language instead.}%
\else
\language=\csname l@#1\endcsname
\fi
#2}}
\providecommand{\BIBdecl}{\relax}
\BIBdecl

\bibitem{zhang2023joint}
Y.~Zhang, C.~You, W.~Yuan, F.~Liu, and R.~Zhang, ``Joint beam scheduling and
  power allocation for {SWIPT} in mixed near-and far-field channels,'' in
  \emph{Proc. IEEE Global Commun. Conf. (Globecom), \emph{Kuala Lumpur,
  Malaysia, Dec. 2023. Available: https://arxiv.org/abs/2304.07945.}}

\bibitem{8869705}
W.~Saad, M.~Bennis, and M.~Chen, ``A vision of {6G} wireless systems:
  Applications, trends, technologies, and open research problems,'' \emph{IEEE
  Network}, vol.~34, no.~3, pp. 134--142, May 2020.

\bibitem{9136592}
C.~Huang, S.~Hu, G.~C. Alexandropoulos, A.~Zappone, C.~Yuen, R.~Zhang, M.~D.
  Renzo, and M.~Debbah, ``Holographic {MIMO} surfaces for {6G} wireless
  networks: Opportunities, challenges, and trends,'' \emph{{IEEE} Wireless
  Commun.}, vol.~27, no.~5, pp. 118--125, Oct. 2020.

\bibitem{7959169}
M.~Xiao, S.~Mumtaz, Y.~Huang, L.~Dai, Y.~Li, M.~Matthaiou, G.~K. Karagiannidis,
  E.~Björnson, K.~Yang, C.-L. I, and A.~Ghosh, ``Millimeter wave
  communications for future mobile networks,'' \emph{{IEEE} J. Sel. Areas
  Commun.}, vol.~35, no.~9, pp. 1909--1935, Sep. 2017.

\bibitem{NBAOMP}
A.~M. Elbir, K.~Vijay~Mishra, and S.~Chatzinotas, ``{NBA-OMP}: Near-field
  beam-split-aware orthogonal matching pursuit for wideband {THz} channel
  estimation,'' in \emph{Proc. IEEE Int. Conf. Acoust., Speech Signal Process.
  (ICASSP)}, Jun. 2023, pp. 1--5.

\bibitem{10041977}
J.~Tian, Y.~Han, S.~Jin, and M.~Matthaiou., ``Low-overhead localization and
  {VR} identification for subarray-based {ELAA} systems,'' \emph{{IEEE}
  Wireless Commun. Lett.}, vol.~12, no.~5, pp. 784--788, Feb. 2023.

\bibitem{9903389}
M.~Cui, Z.~Wu, Y.~Lu, X.~Wei, and L.~Dai, ``Near-field {MIMO} communications
  for {6G}: Fundamentals, challenges, potentials, and future directions,''
  \emph{{IEEE} Commun. Mag.}, vol.~61, no.~1, pp. 40--46, Jan. 2023.

\bibitem{9326394}
Q.~Wu, S.~Zhang, B.~Zheng, C.~You, and R.~Zhang, ``Intelligent reflecting
  surface-aided wireless communications: A tutorial,'' \emph{{IEEE} Trans.
  Commun.}, vol.~69, no.~5, pp. 3313--3351, May 2021.

\bibitem{9140329}
M.~Di~Renzo, A.~Zappone, M.~Debbah, M.-S. Alouini, C.~Yuen, J.~de~Rosny, and
  S.~Tretyakov, ``Smart radio environments empowered by reconfigurable
  intelligent surfaces: How it works, state of research, and the road ahead,''
  \emph{{IEEE} J. Sel. Areas Commun.}, vol.~38, no.~11, pp. 2450--2525, Nov.
  2020.

\bibitem{9424177}
Y.~Liu, X.~Liu, X.~Mu, T.~Hou, J.~Xu, M.~Di~Renzo, and N.~Al-Dhahir,
  ``Reconfigurable intelligent surfaces: Principles and opportunities,''
  \emph{{IEEE} Commun. Surv. Tut.}, vol.~23, no.~3, pp. 1546--1577,
  thirdquarter 2021.

\bibitem{9617121}
H.~Lu and Y.~Zeng, ``Communicating with extremely large-scale array/surface:
  Unified modeling and performance analysis,'' \emph{{IEEE} Trans. Wireless
  Commun.}, vol.~21, no.~6, pp. 4039--4053, Jun. 2022.

\bibitem{eamaz2023near}
A.~Eamaz, F.~Yeganegi, K.~V. Mishra, and M.~Soltanalian, ``Near-field
  low-{WISL} unimodular waveform design for terahertz automotive radar,''
  \emph{arXiv preprint arXiv:2303.04332}, 2023.

\bibitem{liu2023near}
Y.~Liu, J.~Xu, Z.~Wang, X.~Mu, and L.~Hanzo, ``Near-field communications: What
  will be different?'' \emph{arXiv preprint arXiv:2303.04003}, 2023.

\bibitem{9738442}
H.~Zhang, N.~Shlezinger, F.~Guidi, D.~Dardari, M.~F. Imani, and Y.~C. Eldar,
  ``Beam focusing for near-field multiuser {MIMO} communications,''
  \emph{{IEEE} Trans. Wireless Commun.}, vol.~21, no.~9, pp. 7476--7490, Sep.
  2022.

\bibitem{9693928}
M.~Cui and L.~Dai, ``Channel estimation for extremely large-scale {MIMO}:
  Far-field or near-field?'' \emph{{IEEE} Trans. Commun.}, vol.~70, no.~4, pp.
  2663--2677, Apr. 2022.

\bibitem{9913211}
Y.~Zhang, X.~Wu, and C.~You, ``Fast near-field beam training for extremely
  large-scale array,'' \emph{{IEEE} Wireless Commun. Lett.}, vol.~11, no.~12,
  pp. 2625--2629, Dec. 2022.

\bibitem{wu2023two}
C.~Wu, C.~You, Y.~Liu, L.~Chen, and S.~Shi, ``Two-stage hierarchical beam
  training for near-field communications,'' \emph{{IEEE} Trans. Veh. Technol.},
  early access, 2023.

\bibitem{9903646}
W.~Liu, H.~Ren, C.~Pan, and J.~Wang, ``Deep learning based beam training for
  extremely large-scale massive {MIMO} in near-field domain,'' \emph{{IEEE}
  Commun. Lett.}, vol.~27, no.~1, pp. 170--174, Jan. 2023.

\bibitem{you2023near}
C.~You, Y.~Zhang, C.~Wu, Y.~Zeng, B.~Zheng, L.~Chen, L.~Dai, and A.~L.
  Swindlehurst, ``Near-field beam management for extremely large-scale array
  communications,'' \emph{arXiv preprint arXiv:2306.16206}, 2023.

\bibitem{han2023cross}
C.~Han, Y.~Chen, L.~Yan, Z.~Chen, and L.~Dai, ``Cross far-and near-field
  wireless communications in terahertz ultra-large antenna array systems,''
  \emph{arXiv preprint arXiv:2301.03035}, 2023.

\bibitem{yu2022adaptive}
W.~Yu, Y.~Shen, H.~He, X.~Yu, S.~Song, J.~Zhang, and K.~B. Letaief, ``An
  adaptive and robust deep learning framework for {THz} ultra-massive {MIMO}
  channel estimation,'' \emph{{IEEE} J. Sel. Topics Signal Process.}, vol.~17,
  no.~4, pp. 761--776, Jul. 2023.

\bibitem{9598863}
X.~Wei and L.~Dai, ``Channel estimation for extremely large-scale {Massive
  MIMO}: Far-field, near-field, or hybrid-field?'' \emph{{IEEE} Commun. Lett.},
  vol.~26, no.~1, pp. 177--181, Jan. 2022.

\bibitem{10129111}
Z.~Ding, R.~Schober, and H.~V. Poor, ``{NOMA}-based coexistence of near-field
  and far-field massive {MIMO} communications,'' \emph{{IEEE} Wireless Commun.
  Lett.}, vol.~12, no.~8, pp. 1429--1433, Aug. 2023.

\bibitem{zhang2023mixed}
Y.~Zhang, C.~You, L.~Chen, and B.~Zheng, ``Mixed near- and far-field
  communications for extremely large-scale array: An interference
  perspective,'' \emph{{IEEE} Commun. Lett.}, Sep. 2023.

\bibitem{9957130}
M.~Cui, L.~Dai, Z.~Wang, S.~Zhou, and N.~Ge, ``Near-field rainbow: Wideband
  beam training for {XL-MIMO},'' \emph{{IEEE} Trans. Wireless Commun.},
  vol.~22, no.~6, pp. 3899--3912, Jun. 2023.

\bibitem{liu2022low}
W.~Liu, C.~Pan, H.~Ren, F.~Shu, S.~Jin, and J.~Wang, ``Low-overhead beam
  training scheme for extremely large-scale {RIS} in near field,'' \emph{{IEEE}
  Trans. Commun.}, vol.~71, no.~8, pp. 4924--4940, May 2023.

\bibitem{zhang2022near}
X.~Zhang, H.~Zhang, and Y.~C. Eldar, ``Near-field sparse channel representation
  and estimation in {6G} wireless communications,'' \emph{{IEEE} Trans.
  Commun.}, early access, 2023.

\bibitem{7942128}
K.~T. Selvan and R.~Janaswamy, ``Fraunhofer and fresnel distances: Unified
  derivation for aperture antennas,'' \emph{IEEE Antennas and Propag. Mag.},
  vol.~59, no.~4, pp. 12--15, Aug. 2017.

\bibitem{bjornson2021primer}
E.~Bj{\"o}rnson, {\"O}.~T. Demir, and L.~Sanguinetti, ``A primer on near-field
  beamforming for arrays and reconfigurable intelligent surfaces,'' in
  \emph{2021 55th Asilomar Conference on Signals, Systems, and Computers},
  2021, pp. 105--112.

\bibitem{10159017}
Y.~Zhang, C.~You, and B.~Zheng, ``Multi-active multi-passive ({MAMP})-{IRS}
  aided wireless communication: A multi-hop beam routing design,'' \emph{{IEEE}
  J. Sel. Areas Commun.}, vol.~41, no.~8, pp. 2497--2513, Aug. 2023.

\bibitem{8642953}
J.~Jin, C.~Xiao, W.~Chen, and Y.~Wu, ``Channel-statistics-based hybrid
  precoding for millimeter-wave {MIMO} systems with dynamic subarrays,''
  \emph{{IEEE} Trans. Commun.}, vol.~67, no.~6, pp. 3991--4003, Jun. 2019.

\bibitem{9786780}
A.~R. Flores and R.~C. de~Lamare, ``Robust and adaptive power allocation
  techniques for rate splitting based {MU-MIMO} systems,'' \emph{{IEEE} Trans.
  Commun.}, vol.~70, no.~7, pp. 4656--4670, Jul. 2022.

\bibitem{7307218}
Q.~Shi, C.~Peng, W.~Xu, M.~Hong, and Y.~Cai, ``Energy efficiency optimization
  for {MISO} {SWIPT} systems with zero-forcing beamforming,'' \emph{{IEEE}
  Trans. Signal Process}, vol.~64, no.~4, pp. 842--854, Feb. 2016.

\bibitem{6489506}
R.~Zhang and C.~K. Ho, ``{MIMO} broadcasting for simultaneous wireless
  information and power transfer,'' \emph{{IEEE} Trans. Wireless Commun.},
  vol.~12, no.~5, pp. 1989--2001, May 2013.

\bibitem{6860253}
J.~Xu, L.~Liu, and R.~Zhang, ``Multiuser {MISO} beamforming for simultaneous
  wireless information and power transfer,'' \emph{{IEEE} Trans. Signal
  Process}, vol.~62, no.~18, pp. 4798--4810, Sep. 2014.

\bibitem{8941080}
Q.~Wu and R.~Zhang, ``Weighted sum power maximization for intelligent
  reflecting surface aided {SWIPT},'' \emph{{IEEE} Wireless Commun. Lett.},
  vol.~9, no.~5, pp. 586--590, May 2020.

\bibitem{9110849}
C.~Pan, H.~Ren, K.~Wang, M.~Elkashlan, A.~Nallanathan, J.~Wang, and L.~Hanzo,
  ``Intelligent reflecting surface aided {MIMO} broadcasting for simultaneous
  wireless information and power transfer,'' \emph{{IEEE} J. Sel. Areas
  Commun.}, vol.~38, no.~8, pp. 1719--1734, Aug. 2020.

\bibitem{8949454}
Y.~Han, S.~Jin, C.-K. Wen, and X.~Ma, ``Channel estimation for extremely
  large-scale massive {MIMO} systems,'' \emph{{IEEE} Wireless Commun. Lett.},
  vol.~9, no.~5, pp. 633--637, May 2020.

\bibitem{9129778}
C.~You, B.~Zheng, and R.~Zhang, ``Fast beam training for {IRS}-assisted
  multiuser communications,'' \emph{{IEEE} Wireless Commun. Lett.}, vol.~9,
  no.~11, pp. 1845--1849, Nov. 2020.

\bibitem{9133142}
------, ``Channel estimation and passive beamforming for intelligent reflecting
  surface: Discrete phase shift and progressive refinement,'' \emph{{IEEE} J.
  Sel. Areas Commun.}, vol.~38, no.~11, pp. 2604--2620, Nov. 2020.

\bibitem{chen2022hierarchical}
J.~Chen, F.~Gao, M.~Jian, and W.~Yuan, ``Hierarchical codebook design for
  near-field mmwave {MIMO} communications systems,'' \emph{{IEEE} Wireless
  Commun. Lett.}, early access, 2023.

\bibitem{9139273}
X.~Mu, Y.~Liu, L.~Guo, J.~Lin, and N.~Al-Dhahir, ``Exploiting intelligent
  reflecting surfaces in {NOMA} networks: Joint beamforming optimization,''
  \emph{{IEEE} Trans. Wireless Commun.}, vol.~19, no.~10, pp. 6884--6898, Oct.
  2020.

\bibitem{boshkovska2015practical}
E.~Boshkovska, D.~W.~K. Ng, N.~Zlatanov, and R.~Schober, ``Practical non-linear
  energy harvesting model and resource allocation for {SWIPT} systems,''
  \emph{{IEEE} Commun. Lett.}, vol.~19, no.~12, pp. 2082--2085, Dec. 2015.

\bibitem{7332956}
Q.~Wu, M.~Tao, D.~W. Kwan~Ng, W.~Chen, and R.~Schober, ``Energy-efficient
  resource allocation for wireless powered communication networks,''
  \emph{{IEEE} Trans. Wireless Commun.}, vol.~15, no.~3, pp. 2312--2327, Mar.
  2016.

\bibitem{8314727}
K.~Shen and W.~Yu, ``Fractional programming for communication systems—{Part
  I}: Power control and beamforming,'' \emph{{IEEE} Trans. Signal Process},
  vol.~66, no.~10, pp. 2616--2630, May 2018.

\bibitem{8995606}
C.~Xing, Y.~Jing, S.~Wang, S.~Ma, and H.~V. Poor, ``New viewpoint and
  algorithms for water-filling solutions in wireless communications,''
  \emph{{IEEE} Trans. Signal Process}, vol.~68, pp. 1618--1634, Feb. 2020.

\bibitem{ding2023joint}
Z.~Ding and H.~V. Poor, ``Joint beam management and power allocation in
  {THz-NOMA} networks,'' \emph{{IEEE} Trans. Commun.}, vol.~71, no.~4, pp.
  2059--2073, Apr. 2023.

\end{thebibliography}
\end{document}